\documentclass[12pt]{amsart}
\usepackage{hyperref}
\usepackage{pdfsync}
\usepackage[latin1]{inputenc}
\usepackage{amsmath,amssymb}
\usepackage{colortbl}
\usepackage[english]{babel}
\usepackage [T1] {fontenc}
\addtolength{\textwidth}{4cm} 
\addtolength{\hoffset}{-2cm} 
\addtolength{\voffset}{-2cm} 
\addtolength{\textheight}{3cm} 
\setlength{\parindent}{0pt} 
\vfuzz2pt
\hfuzz2pt
\newtheorem{thm}{Theorem}[section]
\newtheorem{cor}[thm]{Corollary}
\newtheorem{lem}[thm]{Lemma}

\theoremstyle{definition}
\newtheorem{defn}[thm]{Definition}
\newtheorem{rem}[thm]{Remark}
\newtheorem{ex}[thm]{Example}

 \numberwithin{equation}{section}
\newcommand\A{\mathcal{A}}

\newcommand\E{\mathcal{E}}
\newcommand\F{\mathcal{F}}
\newcommand\N{\mathbb{N}}

\newcommand\R{{\mathbb R}}
\newcommand\C{\mathbb{C}}
\newcommand\K{\mathcal{K}}

\newcommand\M{\mathcal{M}}

\newcommand\calH{\mathcal{H}}
\newcommand\J{\mathcal{J}}

\begin{document}
\title[Existence, degeneracy and stability of ground states]{Existence, degeneracy and stability of ground states by logarithmic Sobolev inequalities\\ on Clifford algebras.}
\author{Fabio E.G. Cipriani}
\address{(F.E.G.C.) Politecnico di Milano, Dipartimento di Matematica, piazza Leonardo da Vinci 32, 20133 Milano, Italy.}
\email{fabio.cipriani@polimi.it}
\footnote{This work has been supported by Laboratoire Ypatia des Sciences Math\'ematiques C.N.R.S. France - Laboratorio Ypatia delle Scienze Matematiche I.N.D.A.M. Italy (LYSM).}
\footnote{Author states no conflict of interest.}
\keywords{Ground state, logarithmic Sobolev inequality, Clifford algebra, energy form, Dirichlet form, relative entropy}
\subjclass{81P17, 81T08, 31C25, 46L57, 47A11}
\date{\today}
\begin{abstract}
We prove existence and finite degeneracy of ground states of energy forms satisfying logarithmic Sobolev inequalities with respect to non vacuum states of Clifford algebras. We then derive the stability of the ground state with respect to certain unbounded perturbations of the energy form. Finally, we show how this provides an infinitesimal approach to existence and uniqueness of the ground state of Hamiltonians considered by L. Gross in QFT, describing spin $1/2$ Dirac particles subject to interactions with an external scalar field.
\end{abstract}
\maketitle
\tableofcontents

\section{Introduction}
In \cite{13} L. Gross proved existence and finite degeneracy of ground states for energy forms generating positivity preserving, hyperconctractive semigroups, acting on the space $L^2(M,\tau)$ of {\it regular probability gage spaces}, i.e. von Neumann algebras $M$ endowed with normal trace states $\tau$. The result, derived from an infinite dimensional, noncommutative extension of the classical Perron-Frobenious theorem, is then applied to establish existence and finite degeneracy of ground states of Hamiltonians describing Fermions subject to interactions, in constructive Quantum Field Theory. In these applications $M$ is the Clifford algebra $Cl(\calH,\J)$ of the Hilbert space with conjugation $(\calH,\J)$ of an electron-positron pair and $\tau$ is the bare vacuum state. In \cite{13}, hypercontractivity is first proved for the semigroup $e^{-tN}$ generated by the number operator $N$, representing the second quantized Hamiltonian of an electron-positron free quantum field, and then it is verified for Hamiltonians involving interactions. In a subsequent work \cite{14}, L. Gross proved that the hypercontractivity of the semigroup $e^{-tN}$ actually comes from a logarithmic Sobolev inequality satisfied by its generator $N$ (see also \cite{5}, \cite{18}). The methods used by L. Gross on general von Neumann algebras $M$ and, in particular, on the Clifford algebra of a separated Hilbert space with involutions, partially rely on the hyperfinitness of $M$, i.e. on the possibility to approximate $M$ by finite dimensional algebras. The need of this assumption was subsequently eliminated by W. Wils in \cite{24}.
\vskip0.2truecm\noindent
The purpose of the present work is to show that existence and finite degeneracy of ground states of energy forms, generating strongly continuous, positivity preserving semigroups on the Hilbert space $L^2(M,\tau)$ of a Clifford algebra $M:=Cl(\calH,\J)$, can be derived directly from the existence of a logarithmic Sobolev inequality.
\vskip0.2truecm\noindent
To this end, in Section 2 we recall the notions of energy forms and Dirichlet forms on the Clifford algebra, emphasising the standard von Neumann structure of the latter (\cite{6}, \cite{7}). We then generalise, in Section 3, the classical tools of {\it convergence in probability} and {\it uniform integrability} for normal states in noncommutative integration theory (see \cite{11}, \cite{22}, \cite{23}, \cite{19}) and then we establish that sets of states having {\it uniformly bounded relative entropy}, with respect to a reference normal state, are uniformly integrable. In Section 4, we apply this approach to show the {\it stability of the ground state} under certain unbounded perturbations of the energy form. In Section 5, we outline how these tools may used to give an alternative, infinitesimal approach based on the logarithmic Sobolev inequality in \cite{14} , to the L. Gross' result in \cite{13} concerning the existence and uniqueness of the ground state of certain physical Hamiltonians for systems of spin 1/2 Dirac particles.

\section{Energy and Dirichlet forms on Clifford algebras}

In this work we denote by $M$ the Clifford algebra $Cl(\calH,\J)$ of a complex Hilbert space $\calH$ with conjugation $\J$ and we refer to Section 4 in \cite{13} for the concerning material.

\subsection{Standard form of Clifford algebras}
The Clifford algebra $M$ is defined as the weakly closed algebra of operators on the Fermi-Fock space $\Lambda(\calH)$ generated by the bounded operators
\[
B_x:=C_x+A_{Jx}\qquad x\in\calH,
\]
where $C_x$ and $A_y:=C^*_y$ denote the familiar {\it creation and annihilation operators} on $\Lambda(\calH)$, associated to vectors $x,y\in\calH$. It is a von Neumann algebra whose generators satisfy the characteristic commutation relations
\[
B_xB_y+B_yB_x=2(\J x|y)\qquad x,y\in\calH
\]
and the involution in $M$ corresponds to the conjugation by $B_x^*=B_{\J x}$. The {\it bare vaccum} vector $\Omega\in\Lambda^0(\calH)\subset\Lambda(\calH)$, represented by $1\in\C=\Lambda^0(\calH)$, provides a faithful, normal state
\[
\tau:M\to\C\qquad \tau(a):=(\Omega|a\Omega)_{\Lambda(\calH)}\qquad a\in M,
\]
which is a trace as it verifies the relations
\[
\tau(ab)=\tau(ba)\qquad a,b\in M.
\]
The GNS space $L^2(M,\tau)$ is the Hilbert space completion of $M$ under the norm defined by the sesquilinear form $M\ni a,b\mapsto\tau(a^*b)=(a\Omega,b\Omega)_{\Lambda(\mathcal{H})}$ and in it the unit $1_M$ of the algebra
provides the norm one vector $\xi_\tau\in L^2(M,\tau)$ representing the trace
\[
\tau(a)=(\xi_\tau|a\xi_\tau)\qquad a\in M.
\]
The left regular representation of the involutive algebra $M$ extends to its GNS representation $L:M\to B(L^2(M,\tau))$. Identifying $M$ with its image $L(M)$, the action $L(a)\xi$ of $a\in M$ on $\xi\in L^2(M,\tau)$ is denoted by $a\xi\in L^2(M,\tau)$.
The vector representing the trace is cyclic, $\overline{M\xi_\tau}=L^2(M,\tau)$, and the map $L^2(M,\tau)\ni a\xi_\tau\mapsto a\Omega\in\Lambda(\calH)$ extends to a unitary isomorphism $D:L^2(M,\tau)\to\Lambda(\calH)$ between the GNS space and the Fermi-Fock space, called {\it duality transform}, which transform the cyclic trace vector into the bare vacuum $D\xi_\tau=\Omega$.\\
The commutant von Neumann algebra
\[
M':=\{b\in B(L^2(M,\tau):ab=ba\,\,\forall a\in M\}
\]
is generated by the extension $R:M\to B(L^2(M,\tau))$ of the right regular representation of $M$.\\
Under the duality transform, the left and right actions of the generators appear as
\[
DL_{B_x}D^{-1}=C_x+A_{\J x}, \qquad DR_{B_x}D^{-1}=(C_x-A_{\J x})S,\qquad x\in\calH,
\]
where $S:=\Gamma(-I_\calH)$ is the symmetry which is the second quantization of the bounded operator $-I_\calH$ on $\calH$ (equivalently one has $D^{-1}S D(B_x)=B_{-x}$ for any $x\in\calH$ on $L^2(M,\tau)$). \vskip0.2truecm\noindent
The positive part $M_+:=\{a^*a\in M:A\in M\}$ of the Clifford algebra determines the subset
\[
L^2_+(M,\tau):=\overline{M_+\xi_\tau}
\]
of the GNS space, which  is {\it self-polar} as it coincides $L^2_+(M,\tau)=L^2_+(M,\tau)^\circ$ with its polar set
\[
L^2_+(M,\tau)^\circ:=\{\xi\in L^2(M,\tau):(\eta|\xi)\ge 0,\,\text{for all}\,\,\eta\in L^2_+(M,\tau)\}.
\]
In particular, it is a {\it closed, convex cone} which defines a real subspace
\[
L^2_\R(M,\tau):=L^2_+(M,\tau)-L^2_+(M,\tau)
\]
and a real structure by which $L^2(M,\tau)$ is the complexification of $L^2_\R(M,\tau)$. The {\it conjugation}
\[
J:L^2(M,\tau)\to L^2(M,\tau)\qquad J(\xi+i\eta)=\xi-i\eta,\qquad \xi,\eta\in L^2_\R(M,\tau),
\]
extend the involution of $M$ in the sense that
\[
J(a\xi_\tau)=a^*\xi_\tau\qquad a\in M.
\]
The {\it positive part} $\xi_+\in L^2_+(M,\tau)$ of a real vector $\xi\in L^2_\R(M,\tau)$, is defined as its projection onto the closed, convex positive cone $L^2_+(M,\tau)$. The {\it negative part}
$\xi_-:=\xi_+-\xi$ of $\xi$ is a {\it positive} vector {\it orthogonal} to the positive part $\xi_+$ and such that $\xi=\xi_+-\xi_-$ appears as the difference of positive, orthogonal vectors and its {\it modulus} is defined as $|\xi|:=\xi_++\xi_-$.
\vskip0.1truecm\noindent
The triple $(M,L^2(M,\tau),L^2(M,\tau))$ is the {\it standard form or standard representation} of $M$, in the sense that it has the following characterising properties
\begin{itemize}
\item $JMJ=M'$
\item $JaJ=a^*,\qquad a\in M\cap M'$
\item $J\xi=\xi,\qquad \xi\in L^2_+(M,\tau)$
\item $aJaJ(L^2_+(M,\tau))\subseteq L^2_+(M,\tau),\qquad a\in\M$
\end{itemize}
where $M':=\{b\in B(L^2(M,\tau)):ab=ba,\,\, a\in M\}$ being the commutant von Neumann algebra of $M$. The entire von Neumann algebra $M$ can be reconstructed by the ordered Hilbert space $(L^2(M,\tau),L^2_+(M,\tau))$. The Clifford algebra is actually a factor in the sense that $M\cap M'= \C\cdot 1_M$.\\
The standard form of general von Neumann algebras were discovered by H. Araki \cite{2} and A. Connes \cite{9} for $\sigma$-finite von Neumann algebras and by U. Haagerup \cite{17} in the general case. A detailed exposition can be found in Section 2.5.4 of \cite{4} and a discussion of its application to physics can be found in Sections 5.1, 5.2 of \cite{16}.
\vskip0.2truecm\noindent
A fundamental property of the standard representation of a von Neumann algebra is that any normal, positive functional $\psi\in M_{*+}$ is represented by a {\it unique} vector $\xi_\psi\in L^2_+(M,\tau)$ as
\[
\psi(a)=(\xi_\psi|a\xi_\psi)\qquad a\in M.
\]
in particular, each normal state is represented by a unique, positive vectors of norm one. This one-to-one correspondence between normal states and positive vectors of norm one is the generalization of the operation to take the square root of a positive measurable function of unit integral on a measurable Lebesgue space $(X,\nu)$ where the (commutative) von Neumann algebra is $L^\infty(X,\nu)$, the space of normal functionals is $L^1(X,\nu)$ and the standard form Hilbert space is $L^2(X,\nu)$ acted on by $L^\infty(X,\nu)$ through pointwise multiplication.\\
In case of the type I von Neumann algebra $B(k)$ of all bounded operators on a separable Hilbert space $k$, the Hilbert space of the standard representation is the space $L^2(k)$ of Hilbert-Schmidt operators on $k$ and the action of the elements of $B(k)$ is given by the left composition. Normal, positive functionals on $B(k)$ can be identified with the trace class positive operators $\rho$ on $k$ and the correspondence mentioned above reduce to take the spectral square root $\sqrt{\rho}\in L^2(k)$.\\
A normal, positive functional is {\it faithful}, i.e. $a\in M_+$ and $\psi(a)=0$ imply $a=0$, if and only if the vector
$\xi_\psi$ is {\it strictly positive}, i.e. $\eta\in L^2_+(M,\tau)$ and $(\xi_\psi|\eta)=0$ imply $\eta=0$, or if $\overline{M_+\xi_\psi}=L^2_+(M,\tau)$.

\subsection{Energy and Dirichlet forms}

We shall be concerned with {\it energy forms} defined as the lower semicontinuous and lower bounded quadratic forms $\E:L^2(M,\tau)\to (-\infty,+\infty]$
\begin{itemize}
\item {\it finite on a dense subspace}, called the domain $\F:=\{\xi\in L^2(M,\tau): \E[\xi]<+\infty\}$
\item {\it real}, in the sense that $\E[J\xi]=\E[\xi]$ for all $\xi\in\F$, and satisfying the following
\item {\it first Beurling-Deny contraction property}
\[
\E[|\xi|]\le\E[\xi]\qquad J\xi=\xi\in\F.
\]
\end{itemize}
These quadratic forms are in one to one correspondence with the strongly continuous, symmetric semigroups on $L^2(M,\tau)$ which are
\begin{itemize}
\item {\it real}: $T_t(J\xi)=J(T_t\xi)$ for all $\xi\in L^2(M,\tau)$ and $t\ge 0$ and
\item {\it positivity preserving}: $T_t\xi\in L^2_+(M,\tau)$ for all $\xi\in L^2_+(M,\tau)$ and $t\ge 0$.
\end{itemize}
If $(H,D(H))$ denotes the self-adjoint, lower semibounded operator associated to the closed quadratic form $(\E,\F)$ on $L^2(M,\tau)$ and $\lambda_0:=\inf sp(H)$, one has the intertwining relations
\[
D(\sqrt{H})=\F,\quad \E[\xi]=\|\sqrt{H-\lambda_0}\xi\|^2+\lambda_0\|\xi\|^2_2\quad \xi\in \F,\qquad T_t=e^{-tH}\quad t\ge 0.
\]
In Quantum Mechanics, an energy form $\E$ represents the energy of a system whose Schr\"odinger operator is $H$.
\vskip0.2truecm\noindent
A related class of forms we shall be concerned with is the following one. Let $\psi\in M_{*+}$ be a fixed faithful state and $\xi_\psi\in L^2_+(M,\tau)$ its representative positive vector. A {\it Dirichlet form} on $L^2(M,\tau)$ {\it relatively to} $\psi$ is a lower semicontinuous and lower bounded quadratic functional $\E:L^2(M,\tau)\to (-\infty,+\infty]$
\begin{itemize}
\item {\it finite on a dense subspace} $\F:=\{\xi\in L^2(M,\tau): \E[\xi]<+\infty\}$, called the domain
\item {\it real}, $\E[J\xi]=\E[\xi]$ for all $\xi\in\F$, and satisfying the following contraction property
\item \[\text{\it Markovianity}\qquad \E[\xi\wedge \xi_\psi]\le\E[\xi],\qquad \forall\,\,J\xi=\xi\in\F.\]
\end{itemize}
Here, following Section 4 of \cite{6}, the vector $\xi\wedge\xi_\psi\in L^2(M,\tau)$ is the Hilbert space projection of $\xi\in L^2_\R(M,\tau)$ onto the closed and convex subset $\{\xi\in L^2_\R(M,\tau): \xi\le\xi_\psi\}$ of real vectors lying below $\xi_\psi$.  In case $\psi=\tau$ and when the vectors in $L^2(M,\tau)$ are thought as closed operators on $L^2(M,\tau)$, $\xi\wedge\xi_\tau\in L^2(M,\tau)$ is the functional calculus $f(\xi)$ by $f(s):=\min(s,1)$ for $s\in\R$ (see \cite{1}, \cite{21}, \cite{10}).
\vskip0.2truecm\noindent
Dirichlet forms are in one to one correspondence with {\it Markovian semigroups relatively to} $\psi$: these are the strongly continuous, symmetric semigroups $\{T_t:t\ge 0\}$ on $L^2(M,\tau)$ which are
\begin{itemize}
\item {\it real}: $T_t(\xi^*)=(T_t\xi)^*$ for all $\xi\in L^2(M,\tau)$ and $t\ge 0$
\item {\it Markovian}: $T_t\xi\le\xi_\psi$ for all $\xi=\xi^*\le\xi_\psi$ and $t\ge 0$.
\end{itemize}
In particular, Dirichlet forms are automatically nonnegative energy forms and Markovian semigroups are automatically
\begin{itemize}
\item {\it contractive}: $\|T_t\xi\|\le\|\xi\|$ for all $\xi\in L^2(M,\tau)$ and $t\ge 0$
\item {\it positivity preserving}: $T_t\xi\in L^2_+(M,\tau)$ for all $\xi\in L^2_+(M,\tau)$ and $t\ge 0$.
\end{itemize}
A condition ensuring that an energy form is a Dirichlet form with respect to a state $\psi\in M_{*+}$ is $\E[\xi_\psi]=0$ (in particular this requires $\xi_\psi\in\F$). In this situation, $\psi$ is called a {\it ground state}.
\begin{rem}
In Quantum Mechanics of spinless Bosons systems, a Dirichlet form represents the energy observable, in the so called {\it ground state representation}. The notion of Markovianity of an energy form
$(\E,\F)$ with respect to a state $\psi\in M_{*+}$ can thus be considered as a generalisation, for spin $1/2$ Fermions systems, of the ground state representation with respect to the ground state
$\psi$.
\end{rem}

\begin{ex}(Clifford-Dirichlet form of the Number operator)
Let $N$ be the number operator $d\Gamma(I_\calH)$ on the Fermi-Fock space $\Lambda(\calH)$, viewed as an operator on $L^2(M,\tau)$
\[
N:=D^{-1}d\Gamma(I_\calH)D.
\]
It is self-adjoint, positive with pure point spectrum given by $sp(N)=\N$. $\lambda_0=0$ is the only discrete eigenvalue and it is non degenerate with strictly positive eigenvector $\xi_0=\xi_\tau$. All others eigenvalues have countable multiplicity.\\
To describe its quadratic form, let $\{x_i\}_{i\in\N^*}\subset\calH$ be any orthonormal basis of real vectors $\J x_i=x_i$, $i\in\N^*$.
Then, the number operator can be represented as
\[
N:=\sum_{i\in\N}a_i^*a_i
\]
where $a_i:=D^{-1}A_{x_i}D\in B(L^2(M,\tau))$ are the annihilation operators acting on $L^2(M,\tau)$.  The quadratic form of $N$, called {\it Clifford-Dirichlet form} in \cite{14}, then appears as
\[
\E_N[\xi]:=\|\sqrt{N}\xi\|^2=\sum_{i\in\N^*}\|a_i\xi\|^2.
\]
The semigroup $e^{-tN}$ on $L^2(M,\tau)$ is Markovian with respect to the trace $\tau$ as proved in Lemma 6.1 and 6.2 of \cite{13}. This result may be achieved showing that $\E_N$ is a Dirichlet form with respect to $\tau$. This is suggested by the observation of L. Gross (Section 2 in \cite{13}) that annihilation operators $a_i:L^2(M,\tau)\to L^2(M,\tau)$ are (bounded) derivations in the following sense:
\begin{itemize}
\item the space $\A:=M\xi_\tau\cap D(\sqrt{N})$ is a form core for $(\E_N,D(\sqrt{N}))$ and an Hilbert subalgebra of the Hilbert algebra $M\xi_\tau$, where
\[
(x\xi_\tau)(y\xi_\tau):=xy\xi_\tau,\qquad (x\xi_\tau)^*:=x^*\xi_\tau\qquad x,y\in M
\]
 \item the GNS space $L^2(M,\tau)$ is a $\A$-bimodule where the right action is the GNS one while the left action is the GNS left action twisted with the symmetry $S:=D^{-1}\Gamma(-I_\calH)D$ of
 $Cl(\calH,\J)$ (see Section 2.1)
\item the Leibniz rule holds true
\[
a_i(xy\xi_\tau)=(a_i(x\xi_\tau))y+(S(x)(a_i(y\xi_\tau))\qquad x,y\in M,\,\,i\in\N^*.
\]
\end{itemize}
Setting $\partial :=\oplus_{i=1}^\infty a_i$ one gets a derivation on $\A$ with values in the $\A$-bimodule $\K:=\oplus_{i=1}^\infty L^2(Cl(\calH,\J))$ by which $\E_N[\xi]=\|\partial\xi\|^2_\K$. Since $\partial$ is also closable, a general result (\cite{8}) implies that $\E_N$ is Dirichlet form with respect to $\tau$ (see \cite{7} for the details).
\end{ex}

\section{Tools of noncommutative integration}

In this section we introduce some tools of noncommutative integration theory, concerning convergence of normal functionals on von Neumann algebras.
\vskip0.2truecm\noindent
By the properties of the standard form, any $\psi\in M_{*+}$ can be represented by a unique positive vector $\xi_\psi\in L^2_+(M,\tau)$ as $\psi(x)=(\xi_\psi|x\xi_\psi)$ for any $x\in M$. Recall that
$\psi$ is faithful if and only if $\xi_\psi$ is strictly positive or if $M_+\xi_\psi$ is dense in the positive cone $L^2_+(M,\tau)$. In these cases $\xi_\psi$ is cyclic for $M$: $\overline{M\xi_\psi}=L^2(M,\tau)$. The positive cone $L^2_+(M,\tau)$ is the closure of $\{yJyJ\xi_\tau\in L^2(M,\tau): y\in M\}$ since $yJyJ\xi_\tau=yJy\xi_\tau=yy^*\xi_\tau\in L^2_+(M,\tau)$ for all $y\in M$ and $yJyJ\eta\in L^2_+(M,\tau)$ for any $y\in M$ and $\eta\in L^2_+(M,\tau)$.

\begin{lem}
Let $\varphi,\psi\in M_{*+}$ be positive functionals and assume $\psi$ to be faithful. Then a linear, densely defined, positive, symmetric, closable operator $(R_\varphi,D(R_\varphi))$ is given by
\[
D(R_\varphi):=M\xi_\psi\qquad R_\varphi(x\xi_\psi):=Jx^*\xi_\varphi\qquad x\in M.
\]
\end{lem}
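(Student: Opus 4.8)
The plan is to verify the five asserted properties in turn, the linchpin being that faithfulness of $\psi$ makes $\xi_\psi$ not merely cyclic but also separating for $M$. First I would establish that the assignment $x\xi_\psi\mapsto Jx^*\xi_\varphi$ depends only on the vector $x\xi_\psi$ and not on the chosen $x\in M$. Since $J$ is a conjugation (in particular injective) and both $a\mapsto a^*$ and $\eta\mapsto J\eta$ are conjugate-linear, it suffices to show that $a\xi_\psi=0$ forces $a^*\xi_\varphi=0$ for $a\in M$. Faithfulness of $\psi$ yields, through the standard-form description recalled above (strict positivity of $\xi_\psi$, equivalently $\overline{M_+\xi_\psi}=L^2_+(M,\tau)$), that $\xi_\psi$ is separating, so $a\xi_\psi=0\Rightarrow a=0\Rightarrow a^*\xi_\varphi=0$; hence $R_\varphi$ is well defined. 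Linearity is then immediate, because the composite of the two conjugate-linear maps $a\mapsto a^*$ and $\eta\mapsto J\eta$ is linear, and density of $D(R_\varphi)=M\xi_\psi$ is exactly the cyclicity of $\xi_\psi$, again a consequence of faithfulness.

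For positivity I would pass to the measurable-operator calculus of the finite faithful normal trace $\tau$, in which vectors of $L^2(M,\tau)$ are realised as $\tau$-square-integrable operators, $J$ acts as the adjoint $\eta\mapsto\eta^*$, and $a\xi_\tau$ corresponds to $a$. Writing $\xi_\psi=h\ge 0$ and $\xi_\varphi=k\ge 0$, one has $x\xi_\psi\leftrightarrow xh$ and $Jx^*\xi_\varphi\leftrightarrow (x^*k)^*=kx$, so that $R_\varphi(x\xi_\psi)=kx$; equivalently $R_\varphi(x\xi_\psi)=Jx^*JJ\xi_\varphi$ is the right multiplication by $x$ applied to $\xi_\varphi$. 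The quadratic form then reads
\[
(R_\varphi(x\xi_\psi)\,|\,x\xi_\psi)=(kx\,|\,xh)=\tau\bigl((kx)^*xh\bigr)=\tau(x^*kxh).
\]
Setting $A:=x^*kx\ge 0$ and $B:=h\ge 0$, this equals $\tau(AB)=\tau(A^{1/2}BA^{1/2})\ge 0$, by positivity of the trace on the noncommutative $L^1$-class to which the relevant products belong ($k,h\in L^2$, $x\in M$ bounded). Hence $R_\varphi$ is positive.

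Symmetry can be read off from positivity—a densely defined operator on a complex Hilbert space with $(T\eta\,|\,\eta)\in\R$ for all $\eta\in D(T)$ is symmetric by polarization—but I would also record it directly: using $k=k^*$, $h=h^*$ and the trace property,
\[
(R_\varphi(x\xi_\psi)\,|\,y\xi_\psi)=\tau(x^*kyh)=\tau(hx^*ky)=(x\xi_\psi\,|\,R_\varphi(y\xi_\psi)).
\]
Closability is then automatic: a densely defined symmetric operator satisfies $R_\varphi\subseteq R_\varphi^*$ with $R_\varphi^*$ densely defined and closed, so $R_\varphi$ admits the closure $R_\varphi^{**}$.

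The only genuinely delicate points are the passage to the separating property of $\xi_\psi$—which is precisely where faithfulness of $\psi$ is indispensable, and where $\varphi$ need \emph{not} be faithful—and the justification that the formal trace manipulations above are legitimate in the measurable-operator calculus of $\tau$. Granting the standard-form facts of Section 2 and the elementary theory of the noncommutative $L^p$ spaces of a finite trace, both reduce to routine verifications, so I expect the well-definedness step to be the main conceptual obstacle and the trace-positivity identity $\tau(AB)\ge 0$ to be the main computational one.
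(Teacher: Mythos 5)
Your proposal is correct, but it proves positivity and symmetry by a genuinely different route than the paper. The paper stays entirely inside the abstract standard-form formalism: symmetry follows from the antiunitarity of $J$, the identity $J\xi_\psi=\xi_\psi$, $J\xi_\varphi=\xi_\varphi$ (positive vectors are $J$-fixed) and the fact that $Jy^*J\in M'$ commutes with $x\in M$; positivity follows from the cone property $x^*Jx^*J\,L^2_+(M,\tau)\subseteq L^2_+(M,\tau)$ together with self-polarity of the cone, which makes $(\xi_\varphi|x^*Jx^*J\xi_\psi)\ge 0$ immediate. You instead pass to the Segal--Dixmier--Nelson realisation of $L^2(M,\tau)$ as $\tau$-square-integrable operators, identify $R_\varphi(x\xi_\psi)$ with the right multiplication $kx$, and reduce positivity to the trace inequality $\tau(AB)=\tau(A^{1/2}BA^{1/2})\ge 0$ and symmetry to the trace property. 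Both are sound; what each buys is clear: the paper's argument uses only the axioms of a standard form, so it is representation-free and would survive verbatim in a non-tracial von Neumann algebra, whereas your argument is more computational and explicit but is tied to the existence of the trace $\tau$ and the attendant noncommutative $L^p$ machinery --- harmless here, since $M=Cl(\calH,\J)$ is a regular gage space, but less portable. A point in your favour: you explicitly verify well-definedness of the assignment $x\xi_\psi\mapsto Jx^*\xi_\varphi$ via the separating property of $\xi_\psi$ (and correctly note that $\varphi$ need not be faithful), a step the paper's proof leaves implicit under ``densely defined since $\psi$ is faithful''; likewise you spell out why densely defined symmetric implies closable, which the paper also takes for granted.
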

\begin{proof}
The operator is densely defined since $\psi$ is faithful. It is symmetric since for all $x,y\in M$
\[
(y\xi_\psi|R_\varphi(x\xi_\psi))=(y\xi_\psi|Jx^*\xi_\varphi)=(x^*\xi_\varphi |Jy\xi_\psi)=(x^*\xi_\varphi |JyJ\xi_\psi)=(Jy^*J\xi_\varphi |x\xi_\psi)=(R_\varphi(y\xi_\varphi)|x\xi_\psi).
\]
Since $\xi_\varphi,\xi_\psi$ are positive, by the properties of the standard form, the positivity of $(R_\varphi,D(R_\varphi))$ follows from $(x\xi_\psi|R_\varphi(x\xi_\psi))=(x\xi_\varphi|Jx^*\xi_\varphi)=(\xi_\varphi|x^*Jx^*J\xi_\psi)\ge 0$ for all $x\in M$.
\end{proof}
The closure of the above operator, still denoted by $(R_\varphi,D(R_\varphi))$, is thus a densely defined, self-adjoint, positive operator on $L^2(M,\tau)$. It is part of the Connes' Radon-Nikodym cocycle \cite{2}, but for the purposes of the present work the solely properties we need of it are those described above.
\begin{ex}
In case $\psi=\tau$ (the situation considered in \cite{13}), for the {\it perturbed state}
\[
\varphi^h(x):=\tau(hx)\qquad x\in M,
\]
associated to a fixed, densely defined, self adjoint, positive, invertible, trace class operator $(h,D(h))$ on $L^2(M,\tau)$, affiliated to $M$ and such that $\tau(h)=1$, one has $\xi_\tau\in D(\sqrt{h})$ and $\xi_{\varphi^h}=\sqrt{h}\xi_\tau$.
Since $(h,D(h))$ is affiliated to $M$, it commutes with all bounded operators $Jx^*J\in M'$ for any $x\in M$, so that
\[
R_{\varphi^h}(x\xi_\tau)=Jx^*J\xi_{\varphi^h}=Jx^*J\sqrt{h}\xi_\tau=\sqrt{h}Jx^*J\xi_\tau=\sqrt{h}Jx^*\xi_\tau=\sqrt{h}x\xi_{\tau}\qquad x\in M.
\]
Since $M\xi_\tau$ is a core both for $R_{\varphi^h}$ and $h$, we have $R_{\varphi^h}=\sqrt{h}$. The Radon-Nikodym derivative $d\varphi^h/d\tau$ is given by $R^2_{\varphi^h}=h$ and it is affiliated to $M$.
\end{ex}
The spectral measure of $R_\varphi$ will be indicated by $\mathbb{E}^\varphi$ so that $R_\varphi=\int_0^{+\infty}\mathbb{E}^\varphi (d\lambda)\,\,\lambda$.

\subsection{Relative uniform integrability of states}

The following is meant to generalise {\it convergence in probability to zero} of sequences of integrable functions on probability spaces.

\begin{defn}[Relative vanishing for sequence of functionals]
A sequence $\varphi_n\in M_{*+}$ {\it vanishes relatively to a faithful $\psi\in M_{*+}$}, we also say that $\varphi_n\to 0$ {\it relatively to} $\psi$, if
\begin{equation}
\lim_n \mathbb{E}^{\varphi_n}(k,+\infty)\xi_\psi=0\qquad \forall\,\, k>0.
\end{equation}
\end{defn}

The following is a simple condition ensuring relative vanishing.

\begin{lem}
If the vectors $\xi_{\varphi_n}\in L^2_+(M)$, representing the functionals $\varphi_n\in M_{*+}$ satisfies
\begin{equation}
\lim_n (\xi_{\varphi_n}|\xi_\psi)=0,
\end{equation}
then $\varphi_n\to 0$ relatively to $\psi$.
\end{lem}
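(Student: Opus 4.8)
The plan is to reduce the statement to a noncommutative Chebyshev (Markov) inequality; the only genuinely new ingredient is the identification of the vector $R_{\varphi_n}\xi_\psi$, after which the conclusion follows from the spectral theorem.

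\emph{Key identity.} First I would note that $\xi_\psi=1_M\xi_\psi\in D(R_{\varphi_n})=M\xi_\psi$, and that the defining formula of the preceding lemma, evaluated at $x=1_M$, gives
\[
R_{\varphi_n}\xi_\psi=J(1_M)^*\xi_{\varphi_n}=J\xi_{\varphi_n}=\xi_{\varphi_n},
\]
where the last equality uses the standard-form property $J\eta=\eta$ for $\eta\in L^2_+(M,\tau)$ together with the positivity of $\xi_{\varphi_n}$. Since this identity holds on the core $M\xi_\psi$, it persists for the self-adjoint closure, so $\xi_\psi$ lies in the domain of the closed operator $R_{\varphi_n}$ with $R_{\varphi_n}\xi_\psi=\xi_{\varphi_n}$.

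\emph{Spectral reformulation.} Writing $\mu_n(\cdot):=\|\mathbb{E}^{\varphi_n}(\cdot)\xi_\psi\|^2$ for the finite positive spectral measure of $R_{\varphi_n}$ at $\xi_\psi$, the key identity combined with the spectral theorem yields
\[
(\xi_{\varphi_n}|\xi_\psi)=(R_{\varphi_n}\xi_\psi|\xi_\psi)=\int_0^{+\infty}\lambda\,d\mu_n(\lambda),
\]
the integral being finite because $\int_0^{+\infty}\lambda^2\,d\mu_n=\|R_{\varphi_n}\xi_\psi\|^2=\|\xi_{\varphi_n}\|^2<+\infty$. Thus the hypothesis $(\xi_{\varphi_n}|\xi_\psi)\to 0$ becomes the statement $\int_0^{+\infty}\lambda\,d\mu_n(\lambda)\to 0$. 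For fixed $k>0$, restricting to $(k,+\infty)$ and using $\lambda\ge k$ there gives
\[
\|\mathbb{E}^{\varphi_n}(k,+\infty)\xi_\psi\|^2=\mu_n\big((k,+\infty)\big)\le\frac{1}{k}\int_{(k,+\infty)}\lambda\,d\mu_n(\lambda)\le\frac{1}{k}\,(\xi_{\varphi_n}|\xi_\psi),
\]
so letting $n\to\infty$ forces $\|\mathbb{E}^{\varphi_n}(k,+\infty)\xi_\psi\|\to 0$ for every $k>0$, which is precisely relative vanishing of $\varphi_n$ with respect to $\psi$.

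\emph{Main obstacle.} There is no deep difficulty: once the right vector is identified the estimate is a one-line Markov inequality. The only points demanding care are the verification of the identity $R_{\varphi_n}\xi_\psi=\xi_{\varphi_n}$ — in particular that $\xi_\psi$ genuinely belongs to the domain of the \emph{closed} operator and that the identity survives passage to the closure — and the accompanying justification that the spectral integral $\int\lambda\,d\mu_n$ converges, legitimizing the expansion $(R_{\varphi_n}\xi_\psi|\xi_\psi)=\int\lambda\,d\mu_n$.
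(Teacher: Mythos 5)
Your proof is correct and takes essentially the same route as the paper: both rest on the identity $R_{\varphi_n}\xi_\psi=J 1_M^*\xi_{\varphi_n}=\xi_{\varphi_n}$ and the Chebyshev--Markov bound $(\xi_\psi|\mathbb{E}^{\varphi_n}(k,+\infty)\xi_\psi)\le \frac{1}{k}(\xi_\psi|R_{\varphi_n}\xi_\psi)$ coming from the spectral theorem. Your write-up merely makes explicit the spectral-measure bookkeeping and the domain/closure checks that the paper's one-line computation leaves implicit.
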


\begin{proof}
The assertion follows from
\[
\|\mathbb{E}^{\varphi_n}(k,+\infty)\xi_\psi\|_2^2=(\xi_\psi|\mathbb{E}^{\varphi_n}(k,+\infty)\xi_\psi)\le\frac{1}{k}\cdot (\xi_\psi|R_{\varphi_n} \xi_\psi)=\frac{1}{k}\cdot (\xi_\psi|\xi_{\varphi_n})\qquad k>0.
\]
\end{proof}


The following generalises uniform integrability of random variables on a probability space.

\begin{defn}[Relative uniform integrability of normal functionals]
A family $\mathcal{C}\subset M_{*+}$ is said to be {\it relatively uniformly integrable with respect to a faithful state} $\psi\in M_{*+}$ if
\begin{equation}
\lim_{k\to+\infty}\sup_{\varphi\in\mathcal{C}} \|\mathbb{E}^{\varphi}(k,+\infty)\xi_\varphi\|_2^2=0
\end{equation}
or, equivalently, if
$
\lim_{k\to+\infty}\sup_{\varphi\in\mathcal{C}} (\xi_\varphi|\mathbb{E}^{\varphi}(k,+\infty)\xi_\varphi).
$
\end{defn}

Any finite set of positive functionals is relatively uniformly integrable with respect to any fixed faithful state. Any sequence of positive functionals, norm convergent to zero, is vanishing and relatively uniformly integrable with respect to any faithful $\psi\in M_{*+}$.

\subsection{Norm convergence to zero and uniform integrability}

The following is a noncommutative version of a classical result relating norm convergence in $L^1(X,m)$ to uniform integrability (see e.g. \cite{12}).

\begin{thm}
A sequence $\varphi_n\in M_{*+}$ uniformly integrable relatively to a faithful $\psi\in M_{*+}$ and such that
\[
\lim_n (\xi_\psi|\xi_{\varphi_n})=0,
\]
norm converges to zero: $\lim_n\|\varphi_n\|_{M_*}=0$.
\end{thm}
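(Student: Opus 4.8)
The plan is to obtain this statement as a corollary of the preceding Theorem, which guarantees norm convergence to zero for any sequence in $M_{*+}$ that is simultaneously vanishing and relatively uniformly integrable with respect to a faithful $\psi$. Relative uniform integrability is already among the hypotheses here, so the whole task reduces to verifying that the strengthened assumption $\lim_n (\xi_\psi|\xi_{\varphi_n})=0$ forces $\varphi_n$ to \emph{vanish} relatively to $\psi$, that is, $\lim_n \mathbb{E}^{\varphi_n}(k,+\infty)\xi_\psi=0$ for every $k>0$.

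This verification is precisely the content of the earlier Lemma on relative vanishing. Indeed, since $\xi_\psi$ and $\xi_{\varphi_n}$ both lie in the self-polar cone $L^2_+(M,\tau)$, their scalar product is real and $(\xi_\psi|\xi_{\varphi_n})=(\xi_{\varphi_n}|\xi_\psi)$, so our hypothesis is literally the one required by that Lemma. Recalling $R_{\varphi_n}\xi_\psi=\xi_{\varphi_n}$ (the case $x=1_M$ of the defining relation $R_{\varphi_n}(x\xi_\psi)=Jx^*\xi_{\varphi_n}$, together with $J\xi_{\varphi_n}=\xi_{\varphi_n}$), the spectral estimate
\[
\|\mathbb{E}^{\varphi_n}(k,+\infty)\xi_\psi\|_2^2\le\frac{1}{k}\,(\xi_\psi|R_{\varphi_n}\xi_\psi)=\frac{1}{k}\,(\xi_\psi|\xi_{\varphi_n})
\]
shows that the left-hand side tends to $0$ for each fixed $k>0$; hence $\varphi_n$ vanishes relatively to $\psi$. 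The sequence being now both vanishing and relatively uniformly integrable with respect to the faithful state $\psi$, the preceding Theorem applies and delivers $\lim_n\|\varphi_n\|_{M_{*+}}=0$.

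There is no genuine obstacle beyond recognising this reduction, the essential point being only that the inner-product hypothesis is a \emph{sufficient} condition for relative vanishing. If one prefers a self-contained argument avoiding the quotation of the previous Theorem, one may instead reproduce its splitting directly: writing $\|\varphi_n\|_{M_{*+}}=(\xi_{\varphi_n}|\mathbb{E}^{\varphi_n}(k,+\infty)\xi_{\varphi_n})+(\xi_{\varphi_n}|\mathbb{E}^{\varphi_n}[0,k]\xi_{\varphi_n})$, one fixes $k=k_\varepsilon$ so that the tail term is uniformly below $\varepsilon$ by uniform integrability, and controls the bounded part through
\[
(\xi_{\varphi_n}|\mathbb{E}^{\varphi_n}[0,k_\varepsilon]\xi_{\varphi_n})=(\xi_\psi|R^2_{\varphi_n}\mathbb{E}^{\varphi_n}[0,k_\varepsilon]\xi_\psi)\le k_\varepsilon\,(\xi_\psi|R_{\varphi_n}\xi_\psi)=k_\varepsilon\,(\xi_\psi|\xi_{\varphi_n}),
\]
where the inequality uses $R^2_{\varphi_n}\mathbb{E}^{\varphi_n}[0,k_\varepsilon]\le k_\varepsilon R_{\varphi_n}$ as positive operators. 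For fixed $k_\varepsilon$ the right-hand side tends to $0$ as $n\to\infty$ by hypothesis, whence $\limsup_n\|\varphi_n\|_{M_{*+}}\le\varepsilon$; letting $\varepsilon\to 0$ finishes the proof. Here the stronger assumption is used exactly to extract a single power of $k_\varepsilon$ while retaining one factor $R_{\varphi_n}$, which is what connects the bounded part to the vanishing quantity $(\xi_\psi|\xi_{\varphi_n})$.
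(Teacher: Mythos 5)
Your proposal is correct; in fact it contains two proofs. Your ``self-contained'' fallback is precisely the paper's own proof: the paper performs the same splitting of $\|\varphi_n\|_{M_{*+}}$, controls the tail term by uniform integrability, and bounds the bounded part by $(\xi_{\varphi_n}|\mathbb{E}^{\varphi_n}[0,k_\varepsilon]\xi_{\varphi_n})\le k_\varepsilon\,(\xi_\psi|\xi_{\varphi_n})$ via the spectral estimate $\lambda\le\sqrt{k_\varepsilon}\,\sqrt{\lambda}$ on $(0,k_\varepsilon]$, which is exactly your operator inequality $R^2_{\varphi_n}\mathbb{E}^{\varphi_n}[0,k_\varepsilon]\le k_\varepsilon R_{\varphi_n}$ in different notation. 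Your primary route, by contrast, genuinely differs from the paper: the paper does not deduce this theorem from the preceding one, whereas you invoke the Lemma (the hypothesis, which is symmetric in the two vectors since both lie in the self-polar cone, implies relative vanishing) and then the preceding Theorem (vanishing plus relative uniform integrability implies norm convergence). That reduction is logically sound as a citation of stated results, and it is shorter. There is, however, a reason to prefer the direct argument, and plausibly it is the reason the paper argues from scratch: the preceding Theorem's printed proof is defective. Its last step asserts $\lim_n(\xi_\psi|\mathbb{E}^{\varphi_n}[0,k_\varepsilon]\xi_\psi)=0$ ``since $\varphi_n\to 0$ relatively to $\psi$'', but relative vanishing controls the complementary projection: one has $\mathbb{E}^{\varphi_n}[0,k_\varepsilon]\xi_\psi=\xi_\psi-\mathbb{E}^{\varphi_n}(k_\varepsilon,+\infty)\xi_\psi\to\xi_\psi\neq 0$. (The statement of that Theorem is still true, but repairing its proof requires an additional splitting of $[0,k_\varepsilon]$ at a small level $\delta$, in the spirit of the classical Vitali argument.) So your reduction rests on a correct statement whose printed proof needs repair, while your direct variant is self-contained, airtight, and even quantitative, giving $\|\varphi_n\|_{M_{*+}}\le\varepsilon+k_\varepsilon\,(\xi_\psi|\xi_{\varphi_n})$ for every $n$. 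Keep the direct version as the actual proof, and, if you wish, note the formal derivation from the Lemma and the preceding Theorem as a remark.
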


\begin{proof}
Consider the splitting
\[
\|\varphi_n\|_{M_{*+}}=(\xi_{\varphi_n}|\xi_{\varphi_n})=(\xi_{\varphi_n}|\mathbb{E}^{\varphi_n}(k,+\infty)\xi_{\varphi_n})
+(\xi_{\varphi_n}|\mathbb{E}^{\varphi_n}[0,k]\xi_{\varphi_n}).
\]
Fixing $\varepsilon>0$, by the uniform integrability hypothesis, there exists $k_\varepsilon>0$ by which the first term on the right hand side can be uniformly bounded as follows
\[
\sup_n (\xi_{\varphi_n}|\mathbb{E}^{\varphi_n}(k_\varepsilon,+\infty)\xi_{\varphi_n})<\varepsilon.
\]
Since for $\lambda\in (0,k_\varepsilon]$ one has $\lambda\le \sqrt{k_\varepsilon}\cdot\sqrt{\lambda}$, the second summand can be evaluated as follows
\[
\begin{split}
(\xi_{\varphi_n}|\mathbb{E}^{\varphi_n}[0,k_\varepsilon]\xi_{\varphi_n})
&=(R_{\varphi_n}\xi_\psi|\mathbb{E}^{\varphi_n}[0,k_\varepsilon]R_{\varphi_n}\xi_\psi)\\
&=(\xi_\psi|\mathbb{E}^{\varphi_n}[0,k_\varepsilon]R^2_{\varphi_n}\mathbb{E}^{\varphi_n}[0,k_\varepsilon]\xi_\psi)\\
&\le k_\varepsilon\cdot(\xi_\psi|\mathbb{E}^{\varphi_n}[0,k_\varepsilon]R_{\varphi_n}\mathbb{E}^{\varphi_n}[0,k_\varepsilon]\xi_\psi)\\
&=k_\varepsilon\cdot(\xi_\psi|R^{1/2}_{\varphi_n}\mathbb{E}^{\varphi_n}[0,k_\varepsilon]R^{1/2}_{\varphi_n}\xi_\psi)\\
&\le k_\varepsilon\cdot(\xi_\psi|R^{1/2}_{\varphi_n}\mathbb{E}^{\varphi_n}[0,+\infty)R^{1/2}_{\varphi_n}\xi_\psi)\\
&=k_\varepsilon\cdot(\xi_\psi|R_{\varphi_n}\xi_\psi)\\
&=k_\varepsilon\cdot(\xi_\psi|\xi_{\varphi_n})\\
\end{split}
\]
so that $0\le\lim_n (\xi_{\varphi_n}|\mathbb{E}^{\varphi_n}[0,k_\varepsilon]\xi_{\varphi_n})\le k_\varepsilon\cdot\lim_n (\xi_\psi|\xi_{\varphi_n})=0$. Thus $\lim_n \|\varphi_n\|_{M_{*+}}\le\varepsilon$ for all $\varepsilon>0$ and then $\lim_n \|\varphi_n\|_{M_{*+}}=0$.
\end{proof}

\subsection{Uniform integrability of sublevel sets of relative entropy}

Recall that relative entropy of a functional $\varphi\in M_{*+}$ with respect to a faithful $\psi\in M_{*+}$ is defined as
\begin{equation}
S(\varphi|\psi):=(\xi_\varphi|\ln R^2_\varphi\xi_\varphi)=(\xi_\psi |R^2_\varphi\ln R^2_\varphi\xi_\psi)\in [0,+\infty]
\end{equation}
(see \cite{3}, in which the notation for our $S(\varphi|\psi)$) is $S(\psi/\varphi)$, and \cite{20}). By convexity of the function $t\mapsto t\ln t$ and the bound $t\ln t\ge t-1$ for all $t>0$, one has
\[
S(\varphi|\psi)\ge \psi(1_M)\ln\psi(1_M)=\|\psi\|_{M_*}\cdot \ln\|\psi\|_{M_*}
\]
and
\[
S(\varphi|\psi)=(\xi_\psi |R^2_\varphi\ln R^2_\varphi\xi_\psi)\ge (\xi_\psi |(R^2_\varphi-1_M)\xi_\psi)=\varphi(1_M)-\psi(1_M)=\|\varphi\|_{M_*}-\|\psi\|_{M_*}.
\]
\begin{ex}
If $\varphi,\psi$ are states then $S(\varphi,\psi)\ge 0$. In the tracial case $\psi=\tau$, the relative entropy of the perturbed state $\varphi^h(\cdot):=\tau(h\cdot)$ of Example 3.2, is given by
\[
S(\varphi^h|\tau)=\tau(h\ln h)=\int_0^{+\infty}(\tau\circ\mathbb{E}^h)(d\lambda)\lambda\ln\lambda.
\]
\end{ex}
The following lower bound will be used below to estimate the degeneracy of ground states. Recall that the support $s_\varphi\in M_+$ of $\varphi\in M_{*+}$ is the projection onto the subspace $J\overline{M\xi_\varphi}\subseteq L^2(M,\tau)$.

\begin{lem}
Let $\varphi,\psi\in M_{*+}$ be states on $M$ and assume $\psi$ to be faithful. Then
\[
S(\varphi|\psi)\ge-\ln\psi(s_\varphi).
\]
\end{lem}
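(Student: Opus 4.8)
The plan is to push the whole operator inequality forward through the scalar spectral measure of $R_\varphi$ evaluated against $\xi_\psi$, turning it into a one-dimensional convexity estimate; this is the noncommutative counterpart of the classical bound $\int f\ln f\,dm\ge -\ln m(\{f>0\})$ for a probability density $f$. The starting data are the two identities recorded just above the statement, namely $S(\varphi|\psi)=(\xi_\psi|R^2_\varphi\ln R^2_\varphi\,\xi_\psi)$ and $(\xi_\psi|R^2_\varphi\xi_\psi)=\varphi(1_M)=1$, together with $\|\xi_\psi\|^2=\psi(1_M)=1$.

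First I would introduce the scalar measure $\nu(B):=(\xi_\psi|\mathbb{E}^\varphi(B)\xi_\psi)$ on $[0,+\infty)$. Since $\psi$ is a state, $\nu$ is a probability measure, and the two displayed identities read $\int_{[0,+\infty)}\lambda^2\,d\nu(\lambda)=1$ and $S(\varphi|\psi)=\int_{[0,+\infty)}\lambda^2\ln(\lambda^2)\,d\nu(\lambda)$. Because the integrands $\lambda^2$ and $\lambda^2\ln\lambda^2$ both vanish at $\lambda=0$, these integrals may be taken over $(0,+\infty)$ without change.

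The key step is the identification $s_\varphi=\mathbb{E}^\varphi((0,+\infty))$, equivalently that $1-s_\varphi=\mathbb{E}^\varphi(\{0\})$ is the projection onto $\ker R_\varphi$. By definition $s_\varphi$ projects onto $J\overline{M\xi_\varphi}$, while on the core $M\xi_\psi$ one has $R_\varphi(x\xi_\psi)=Jx^*\xi_\varphi$, so the range of $R_\varphi$ restricted to this core is exactly $JM\xi_\varphi$. Passing to the closure of the positive self-adjoint operator $R_\varphi$ then gives $\overline{\operatorname{ran}R_\varphi}=J\overline{M\xi_\varphi}=\operatorname{ran}s_\varphi$, so that $s_\varphi$ is the projection onto $(\ker R_\varphi)^\perp$, i.e. onto $\mathbb{E}^\varphi((0,+\infty))L^2(M,\tau)$. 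Consequently $p:=\psi(s_\varphi)=(\xi_\psi|\mathbb{E}^\varphi((0,+\infty))\xi_\psi)=\nu((0,+\infty))$, and $p>0$ because $\psi$ is faithful and $s_\varphi\neq 0$.

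Finally I would normalise. Restricting and rescaling $\nu$ to the probability measure $\hat\nu:=p^{-1}\nu|_{(0,+\infty)}$, the mass constraint becomes $\int_{(0,+\infty)}\lambda^2\,d\hat\nu=1/p$. Applying Jensen's inequality to the convex function $t\mapsto t\ln t$ with the random variable $\lambda^2$ yields $\int_{(0,+\infty)}\lambda^2\ln(\lambda^2)\,d\hat\nu\ge (1/p)\ln(1/p)$, and multiplying back by $p$ gives $S(\varphi|\psi)=p\int_{(0,+\infty)}\lambda^2\ln(\lambda^2)\,d\hat\nu\ge \ln(1/p)=-\ln\psi(s_\varphi)$, as claimed. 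I expect the only genuinely delicate point to be the support identification in the third step: one must check that it is the closure of $R_\varphi$, rather than its restriction to the core $M\xi_\psi$, whose range has closure exactly $J\overline{M\xi_\varphi}$. The subsequent convexity estimate is then entirely routine.
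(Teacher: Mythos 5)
Your proof is correct and is essentially the paper's own argument: both reduce the statement to Jensen's inequality for the convex function $t\mapsto t\ln t$ applied to the scalar spectral measure of $R_\varphi$ evaluated against $\xi_\psi$, with the normalization constant being exactly the mass $\psi(s_\varphi)$ carried on the support. The only cosmetic difference is that you identify $s_\varphi$ with the spectral projection $\mathbb{E}^\varphi((0,+\infty))$ through the closure of the range of $R_\varphi$, whereas the paper derives the absorption identities $s_\varphi R_\varphi=R_\varphi s_\varphi=R_\varphi$ by algebraic computation on the core $M\xi_\psi$ and applies Jensen to the probability measure $(\eta|\mathbb{E}^\varphi(\cdot)\eta)$ with $\eta:=s_\varphi\xi_\psi/\|s_\varphi\xi_\psi\|$ --- which is precisely your normalized measure $\hat\nu$, so the two reductions coincide.
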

\begin{proof}
To shorten notation let us denote the support $s_\varphi$ just by $s$. Then $sR_\varphi(x\xi_\psi)=sJx^*\xi_\varphi=Jx^*Ja\xi_\varphi=Jx^*J\xi_\varphi=R_\varphi(x\xi_\psi)$ for all $x\in M$ so that $sR_\varphi=R_\varphi$ and $sR^2_\varphi=R^2_\varphi$. Moreover, $R_\varphi(sx\xi_\psi)=Jx^*s\xi_\varphi=Jx^*\xi_\psi=R_\varphi(x\psi)$ for all $x\in $ so that $R_\varphi\circ s=R_\varphi$ and $R_\varphi\circ s=R_\varphi$. Setting $\eta:=s\xi_\psi/\|s\xi_\psi\|$, consider the  probability measure $\mathbb{E}^\varphi_{\eta,\eta}(\cdot):=(\eta|\mathbb{E}^\varphi(\cdot)\eta)$. Since $R_\varphi s\xi_\psi=Js^*\xi_\varphi=Js\xi_\varphi=J\xi_\varphi=\xi_\varphi$, we have
\[
\begin{split}
S(\varphi|\psi)&=(\xi_\psi |R^2_\varphi\ln R^2_\varphi\xi_\psi)=\|s\xi_\psi\|^2\cdot (\eta |R^2_\varphi\ln R^2_\varphi \eta)=
\|s\xi_\psi\|^2\cdot\int_0^{+\infty}\mathbb{E}^\varphi_{\eta,\eta}(d\lambda)\,\lambda^2\ln\lambda^2\\
&\ge \|s\xi_\psi\|^2\cdot\ln\Bigl(\int_0^{+\infty}\mathbb{E}^\varphi_{\eta,\eta}(d\lambda)\,\lambda^2\Bigr)
=\ln\|R_\varphi \eta\|^2=\ln\Bigl(\|R_\varphi s\xi_\psi\|^2/\|s\xi_\psi\|^2\Bigr)\\
&=\ln\Bigl(\|\xi_\varphi\|^2/\|s\xi_\psi\|^2\Bigr)=\ln\Bigl(1/\|s\xi_\psi\|^2\Bigr)=-\ln(\xi_\psi|s\xi_\psi)=-\ln\psi(s).
\end{split}
\]
\end{proof}

\begin{thm}
A family $\mathcal{C}\subset M_{*+}$, whose relative entropies with respect to a fixed faithful $\psi\in M_{*+}$, are uniformly bounded above
\begin{equation}
\sup_{\varphi\in\mathcal{C}} S(\varphi|\psi)<+\infty,
\end{equation}
is bounded in $M_*$ and relatively uniformly integrable with respect to $\psi$.
\end{thm}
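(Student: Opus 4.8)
The plan is to deduce the two assertions separately from the entropy bound, read through the spectral representation of $R_\varphi$, setting $C:=\sup_{\varphi\in\mathcal{C}}S(\varphi|\psi)<+\infty$.

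Boundedness in $M_*$ is essentially already available: the lower bound $S(\varphi|\psi)\ge\|\varphi\|_{M_*}-\|\psi\|_{M_*}$ recorded above (coming from $t\ln t\ge t-1$) gives at once $\|\varphi\|_{M_*}\le C+\|\psi\|_{M_*}$ for every $\varphi\in\mathcal{C}$, so no further argument is needed here.

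For the relative uniform integrability I would pass to the scalar spectral measure $\mu_\varphi(\cdot):=(\xi_\psi|\mathbb{E}^\varphi(\cdot)\xi_\psi)$ on $[0,+\infty)$, a positive measure of finite total mass $\|\xi_\psi\|^2=\|\psi\|_{M_*}$. Recalling $\xi_\varphi=R_\varphi\xi_\psi$ and that $R_\varphi$ commutes with its own spectral projections (which, being projections, make the two forms in the definition coincide), the relevant tail becomes
\[
(\xi_\varphi|\mathbb{E}^\varphi(k,+\infty)\xi_\varphi)=(\xi_\psi|\mathbb{E}^\varphi(k,+\infty)R_\varphi^2\xi_\psi)=\int_{(k,+\infty)}\lambda^2\,\mu_\varphi(d\lambda),
\]
whereas the entropy is $S(\varphi|\psi)=\int_0^{+\infty}\lambda^2\ln\lambda^2\,\mu_\varphi(d\lambda)$. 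The statement thus reduces to a de la Vall\'ee-Poussin type comparison of the weights $\lambda^2$ and $\lambda^2\ln\lambda^2$, uniform in $\varphi$.

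The comparison is where the actual work lies, and it is the step I expect to be the main obstacle. On the tail $(k,+\infty)$ with $k>1$ one has $\ln\lambda^2>2\ln k>0$, hence the pointwise domination $\lambda^2\le(2\ln k)^{-1}\lambda^2\ln\lambda^2$ there; this immediately controls the tail of $\int\lambda^2\,d\mu_\varphi$ by the tail of the entropy integrand. The subtlety is that $\lambda^2\ln\lambda^2$ is negative for $\lambda<1$, so the tail of the entropy integral is not bounded by $S(\varphi|\psi)$ directly. I would circumvent this with the uniform pointwise bound $t\ln t\ge-1/e$, i.e. $\lambda^2\ln\lambda^2\ge-1/e$, which yields
\[
\int_{(k,+\infty)}\lambda^2\ln\lambda^2\,\mu_\varphi(d\lambda)=S(\varphi|\psi)-\int_{[0,k]}\lambda^2\ln\lambda^2\,\mu_\varphi(d\lambda)\le C+\frac{1}{e}\,\|\psi\|_{M_*}.
\]
Combining the two displays gives $\sup_{\varphi\in\mathcal{C}}(\xi_\varphi|\mathbb{E}^\varphi(k,+\infty)\xi_\varphi)\le(2\ln k)^{-1}\bigl(C+e^{-1}\|\psi\|_{M_*}\bigr)\to0$ as $k\to+\infty$, which is precisely relative uniform integrability with respect to $\psi$.
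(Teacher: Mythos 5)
Your proof is correct, and its skeleton is the same as the paper's: boundedness in $M_*$ is obtained verbatim from the recorded lower bound $S(\varphi|\psi)\ge\|\varphi\|_{M_*}-\|\psi\|_{M_*}$, and relative uniform integrability from a de la Vall\'ee-Poussin comparison of spectral weights on the tail $(k,+\infty)$. (The paper phrases the comparison with the measure $\mathbb{E}^{R_\varphi}_\varphi(d\lambda)=(\xi_\varphi|\mathbb{E}^{R_\varphi}(d\lambda)\xi_\varphi)$, which is exactly your $\lambda^2\mu_\varphi(d\lambda)$, so its weights are $1$ versus $\ln\lambda^2/(2\ln k)$ rather than your $\lambda^2$ versus $\lambda^2\ln\lambda^2/(2\ln k)$.) The genuine difference is precisely the point you flagged as the main obstacle: the negative part of the entropy integrand below $\lambda=1$. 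The paper passes over it --- having bounded the tail by $\frac{1}{2\ln k}\int_k^{+\infty}\ln\lambda^2\,\mathbb{E}^{R_\varphi}_\varphi(d\lambda)$, it simply enlarges the domain of integration to $[0,+\infty)$ to reach $\frac{1}{2\ln k}\,S(\varphi|\psi)$, a step that tacitly assumes $\int_0^{k}\ln\lambda^2\,\mathbb{E}^{R_\varphi}_\varphi(d\lambda)\ge 0$. That assumption can fail: in the tracial setting of Example 3.2, a density $h$ with spectral value $1/2$ on $\tau$-mass $2/3$ and value $2$ on $\tau$-mass $1/3$ gives, for $1<k<\sqrt{2}$, a tail integral equal to $\frac{2}{3}\ln 2$, strictly larger than $S(\varphi|\tau)=\frac{1}{3}\ln 2$. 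Your insertion of the uniform bound $t\ln t\ge -1/e$ closes exactly this gap; the price is the slightly worse constant $C+e^{-1}\|\psi\|_{M_*}$ in place of $C$, which is immaterial since only the limit $k\to+\infty$ matters. So your argument is not merely a reproduction of the paper's proof but a corrected version of it, and the theorem's conclusion stands unchanged.
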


\begin{proof}
Since $\|\varphi\|_{M_*}-\|\psi\|_{M_*}\le S(\varphi|\psi)$, the boundedness of $\mathcal{C}$ follows from $\sup_{\varphi\in\mathcal{C}}\|\varphi\|_{M_*}\le \|\psi\|_{M_*}+\sup_{\varphi\in\mathcal{C}} S(\varphi|\psi)<+\infty$. Let now $\mathbb{E}^{R_\varphi}_\varphi$ be the measure on $[0,+\infty)$ obtained composing the spectral measure $\mathbb{E}^{R_\varphi}$ with the functional $\varphi\in\mathcal{C}$. For any $k\ge 1$, we have
\[
\begin{split}
(\xi_\varphi|\mathbb{E}^{R_\varphi}(k,+\infty)\xi_\varphi)&=\int_k^{+\infty}1\cdot\mathbb{E}^{R_\varphi}_\varphi(d\lambda)
\le \int_k^{+\infty}\frac{\ln\lambda^2}{\ln k^2}\cdot\mathbb{E}^{R_\varphi}_\varphi(d\lambda)\\
&\le \frac{1}{2\ln k}\cdot\int_0^{+\infty}\ln\lambda^2\cdot\mathbb{E}^{R_\varphi}_\varphi(d\lambda)= \frac{1}{2\ln k}\cdot(\xi_\varphi|\ln R^2_\varphi\xi_\varphi)= \frac{1}{2\ln k}\cdot S(\varphi|\psi)
\end{split}
\]
from which it follows that
\[
\lim_{k\to+\infty}\sup_{\varphi\in\mathcal{C}} (\xi_\varphi|\mathbb{E}^{R_\varphi}(k,+\infty)\xi_\varphi)\le\sup_{\varphi\in\mathcal{C}} S(\varphi|\psi)\cdot\lim_{k\to+\infty} 1/2\ln k=0.
\]
\end{proof}

\begin{thm}
Let $\mathcal{C}\subset M_{*+}$ be a family whose relative entropies with respect to a fixed faithful functional $\psi\in M_{*+}$, are uniformly bounded above
\[
\sup_{\varphi\in\mathcal{C}} S(\varphi|\psi)<+\infty.
\]
Let $\Xi_\mathcal{C}\subset L^2_+(M,\tau)$ be the family of positive vectors representing the functionals in $\mathcal{C}$,
\begin{equation}
\Xi_\mathcal{C}:=\{\xi\in L^2_+(M,\tau):\varphi(\cdot)=(\xi|\cdot\xi)\in\mathcal{C}\}
\end{equation}
Then the weak closure of $\Xi_\mathcal{C}$ in $L^2(M,\tau)$ does not contains zero.
\end{thm}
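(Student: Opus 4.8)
The plan is to argue by contradiction, converting the qualitative statement ``$0$ is not a weak limit point'' into the quantitative impossibility that the representing unit vectors lose all their mass. First I would record, from the Theorem immediately preceding, that the uniform entropy bound $\sup_{\varphi\in\mathcal{C}}S(\varphi|\psi)<+\infty$ already forces $\mathcal{C}$ to be bounded in $M_*$ and, crucially, relatively uniformly integrable with respect to $\psi$. Since the elements of $\mathcal{C}$ are states, each representing vector satisfies $\|\xi_\varphi\|_2^2=\varphi(1_M)=1$, so $\Xi_\mathcal{C}$ lies on the unit sphere of $L^2(M,\tau)$; it is precisely this normalization that a weak limit point $0$ would violate.

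Next I would assume, for contradiction, that $0$ belongs to the weak closure of $\Xi_\mathcal{C}$ and pick a net $(\xi_{\varphi_\alpha})_\alpha\subset\Xi_\mathcal{C}$ converging weakly to $0$. The only consequence of weak convergence I actually need is the scalar limit obtained by pairing against the single fixed vector $\xi_\psi$, namely $(\xi_\psi|\xi_{\varphi_\alpha})\to 0$. I would then reproduce the splitting used in the norm-convergence criterion established above: fix $\varepsilon>0$, use relative uniform integrability to select $k_\varepsilon>0$ with $\sup_{\varphi\in\mathcal{C}}(\xi_\varphi|\mathbb{E}^\varphi(k_\varepsilon,+\infty)\xi_\varphi)<\varepsilon$, and control the low-spectral part by the estimate $(\xi_\varphi|\mathbb{E}^\varphi[0,k_\varepsilon]\xi_\varphi)\le k_\varepsilon\,(\xi_\psi|\xi_\varphi)$ already derived there. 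This gives, for every index $\alpha$,
\[
1=\|\xi_{\varphi_\alpha}\|_2^2=(\xi_{\varphi_\alpha}|\mathbb{E}^{\varphi_\alpha}(k_\varepsilon,+\infty)\xi_{\varphi_\alpha})+(\xi_{\varphi_\alpha}|\mathbb{E}^{\varphi_\alpha}[0,k_\varepsilon]\xi_{\varphi_\alpha})\le \varepsilon+k_\varepsilon\,(\xi_\psi|\xi_{\varphi_\alpha}).
\]
Passing to the limit along the net yields $1\le\varepsilon$ for every $\varepsilon>0$, which is absurd and closes the contradiction.

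The genuine obstacle is the mechanism that upgrades weak convergence to a norm statement: weak convergence to $0$ supplies a priori only the vanishing of the single pairing $(\xi_\psi|\xi_{\varphi_\alpha})$, and it is exactly the uniform integrability furnished by the entropy bound that lets this one scalar dominate the whole norm uniformly in $\alpha$. I would also watch two subsidiary points. First, the normalization $\varphi(1_M)=1$ is indispensable: for arbitrary positive functionals the conclusion is false (e.g.\ $\varphi_n=\tfrac1n\psi$ has $S(\varphi_n|\psi)\to 0$ while $\xi_{\varphi_n}\to 0$ strongly), so the argument must genuinely use that $\mathcal{C}$ consists of states rather than of general elements of $M_{*+}$. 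Second, I would either run the estimate directly with nets, as above --- in which case no metrizability is needed, the $\varepsilon$-bound being proved index by index --- or, if one prefers to quote the sequential norm-convergence criterion verbatim, invoke separability of $L^2(M,\tau)$ to replace the net by a sequence on the bounded set $\Xi_\mathcal{C}$, on which the weak topology is metrizable.
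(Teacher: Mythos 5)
Your proposal is correct and takes essentially the same route as the paper: both deduce relative uniform integrability from the entropy bound via the preceding theorem, split $\|\xi_\varphi\|^2$ spectrally at a level $k_\varepsilon$, bound the low-spectral part by $k_\varepsilon\,(\xi_\psi|\xi_\varphi)$, and let the vanishing of the pairing against $\xi_\psi$ contradict the normalization $\|\xi_\varphi\|=1$. The only differences are presentational and slightly in your favor: you run the estimate with nets, thereby avoiding the paper's unjustified extraction of a weakly convergent sequence from the weak closure (which needs Eberlein--\v{S}mulian or separability of $L^2(M,\tau)$), and you make explicit --- with a correct counterexample $\varphi_n=\tfrac{1}{n}\psi$ --- that the statement requires $\mathcal{C}$ to consist of states, a normalization the paper's proof also uses implicitly when it writes $\lim_n\|\varphi_{\xi_n}\|_{M_*}=1$.
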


\begin{proof}
As in the proof of Theorem 3.9, $\sup_{\xi\in\Xi_\mathcal{C}}\|\xi\|^2\le \|\psi\|_{M_*}+\sup_{\varphi\in\mathcal{C}} S(\varphi|\psi)<+\infty$, so that $\Xi_\mathcal{C}$ is norm bounded,
hence weakly relatively compact in $L^2(M)$. If $\xi\in L^2_+(M)$ is a weak limit point of it in $L^2(M)$, there exists a sequence $\xi_n\in\Xi_\mathcal{C}$ weakly convergent to $\xi$. Since the relative entropy functional $\varphi\mapsto S(\varphi |\psi)$
is uniformly bounded on $\mathcal{C}$, by Theorem 3.9, the sequence $\{\varphi_{\xi_n}\in\mathcal{C}:\xi_n\in\Xi_\mathcal{C}\}$ is a fortiori relatively uniformly integrable.
If we had $\lim_n (\xi_{\varphi_n}|\xi_\psi)=0$, by Theorem 3.6, we would have $\lim_n\|\varphi_{\xi_n}\|_{M_*}=0$ which is a contradiction since $\lim_n\|\varphi_{\xi_n}\|_{M_*}=\lim_n 1=1$. Thus, necessarily, $(\xi_\psi|\xi)=\lim_n (\xi_{\varphi_n}|\xi_\psi)\neq 0$ and since $\xi_\psi\in L^2_+(M,\tau)$ is strictly positive as $\psi$ is faithful, we have $\xi\neq 0$.

\end{proof}

\section{Ground states by logarithmic Sobolev inequalities}

Let $\psi\in M_{*+}$ be a fixed, faithful, normal state on the Clifford algebra $M:=Cl(H,\mathcal{J})$.

\begin{defn}
An energy form $(\E,\F)$ on $L^2(M,\tau)$ satisfies a {\it logarithmic Sobolev inequality with respect to $\psi\in M_{*+}$} if, for some $\beta> 0$ and $\gamma\in\R$,
\begin{equation}
S(\varphi_\xi|\psi)\le \beta\cdot (\E[\xi]-\gamma)\qquad\xi\in L^2_+(M,\tau),\quad \|\xi\|=1.
\end{equation}
This implies that if $\xi\in\F\cap L^2_+(M,\tau)$ and $\|\xi\|=1$, then the relative entropy $S(\varphi_\xi|\psi)$ is finite and that (4.1) can be rewritten as the lower boundedness of the {\it free-energy type} functional
\begin{equation}
\gamma\le\E[\xi]-\frac{1}{\beta}S(\varphi_\xi|\psi)\qquad\xi\in\F\cap L^2_+(M,\tau),\quad \|\xi\|=1.
\end{equation}
\end{defn}
If $\gamma=\E[\xi_\psi]$, the LSI is equivalent to the lower boundedness of the variation of the energy form $\E$ with respect to the ground state by the relative entropy of the states
\begin{equation}
\frac{1}{\beta}\cdot S(\varphi_\xi|\psi)\le \E[\xi]-\E[\xi_\psi]\qquad\xi\in\F\cap L^2_+(M,\tau),\quad \|\xi\|=1.
\end{equation}

\begin{ex}(Logarithmic Sobolev inequality for the Clifford-Dirichlet form)
The Clifford-Dirichlet satisfies the following logarithmic Sobolev inequality
\begin{equation}
S(\varphi_\xi|\tau)\le\E_N[\xi]\qquad\xi\in L^2_+(M,\tau),\,\, \|\xi\|=1
\end{equation}
with parameters $\beta=1$ and $\gamma=0$. The inequality was proved in \cite{14} with $\beta=\sqrt{3}$ and the best value $\beta=1$, conjectured by L. Gross, was found in \cite{5}.
\end{ex}

We may now show that a logarithmic Sobolev inequality directly implies existence and finite degeneracy of the ground state.

\begin{thm}[Existence and degeneracy of ground states of energy forms]
Let $(\E,\F)$ be an energy form on $L^2(M,\tau)$, satisfying the logarithmic Sobolev inequality (4.1) with respect to a faithful state $\psi\in M_{*+}$, and let $(H,D(H))$ be the associated self-adjoint operator.\\
Then the lowest energy $\lambda_0:=\inf sp(H)$ is an eigenvalue of finite multiplicity $m_0\le e^{\beta(\lambda_0-\gamma)}$.
\end{thm}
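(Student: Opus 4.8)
The plan is to treat existence and degeneracy separately, feeding the logarithmic Sobolev inequality into the entropy machinery of Section 3 for the first and into the support estimate $S(\varphi|\psi)\ge-\ln\psi(s_\varphi)$ for the second.

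For existence I would start from the variational description $\lambda_0=\inf\{\E[\xi]:\xi\in\F,\ \|\xi\|=1\}$ and reduce it to positive vectors, using reality of $\E$ to pass to real vectors and the first Beurling--Deny property $\E[|\xi|]\le\E[\xi]$ together with $\||\xi|\|=\|\xi\|$ to pass to their moduli, so that $\lambda_0=\inf\{\E[\xi]:\xi\in\F\cap L^2_+(M,\tau),\ \|\xi\|=1\}$. Choosing a minimising sequence $\xi_n\in\F\cap L^2_+(M,\tau)$ of unit vectors with $\E[\xi_n]\to\lambda_0$, the inequality (4.1) gives the uniform bound $\sup_n S(\varphi_{\xi_n}|\psi)\le\beta(\sup_n\E[\xi_n]-\gamma)<+\infty$. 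The family $\{\varphi_{\xi_n}\}$ therefore has uniformly bounded relative entropy, so by the last theorem of Section 3 every weak limit point $\xi$ of the norm-bounded sequence $\{\xi_n\}$ is a non-zero element of the weakly closed cone $L^2_+(M,\tau)$. Weak lower semicontinuity of the closed nonnegative quadratic form $\eta\mapsto\|\sqrt{H-\lambda_0}\,\eta\|^2=\E[\eta]-\lambda_0\|\eta\|^2$ then yields $\|\sqrt{H-\lambda_0}\,\xi\|^2\le\liminf_n(\E[\xi_n]-\lambda_0)=0$, so $\xi\in\ker(H-\lambda_0)$ is a non-zero ground state eigenvector and $\lambda_0$ is an eigenvalue.

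For the degeneracy bound I would set $V:=\ker(H-\lambda_0)$ and $V_\R:=V\cap L^2_\R(M,\tau)$; as $H$ commutes with $J$, $V$ is the complexification of $V_\R$ and $m_0=\dim_\R V_\R$. The structural observation is that $V_\R$ is a vector lattice: for $\xi\in V_\R$ one has $\E[\xi]=\lambda_0\|\xi\|^2$, hence $\lambda_0\||\xi|\|^2\le\E[|\xi|]\le\E[\xi]=\lambda_0\||\xi|\|^2$, forcing $|\xi|\in V_\R$ and stability of $V_\R$ under the lattice operations. I would then pick a maximal family $\xi_1,\dots,\xi_m$ of normalised, pairwise disjoint positive vectors in $V_\R$ and show that disjointness $\xi_i\wedge\xi_j=0$ is equivalent to orthogonality $s_is_j=0$ of the support projections $s_i:=s_{\varphi_{\xi_i}}$, so that $\sum_{i=1}^m s_i\le1_M$ and $\sum_{i=1}^m\psi(s_i)\le1$.

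The counting is then immediate: applying (4.1) to each unit positive eigenvector gives $S(\varphi_{\xi_i}|\psi)\le\beta(\lambda_0-\gamma)$, while the support estimate gives $S(\varphi_{\xi_i}|\psi)\ge-\ln\psi(s_i)$, whence $\psi(s_i)\ge e^{-\beta(\lambda_0-\gamma)}$ and $m\,e^{-\beta(\lambda_0-\gamma)}\le\sum_i\psi(s_i)\le1$, that is $m\le e^{\beta(\lambda_0-\gamma)}$. It remains to identify $m$ with $m_0$: by maximality the disjoint complement of $\{\xi_1,\dots,\xi_m\}$ in $V_\R$ is trivial, so the band it generates exhausts $V_\R$; each generating band contains no two non-zero disjoint vectors, hence is totally ordered and, by the Archimedean property of the ordered Hilbert space, one-dimensional, giving $V_\R=\bigoplus_{i=1}^m\R\,\xi_i$ and $m_0=m\le e^{\beta(\lambda_0-\gamma)}$. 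I expect the main obstacle to be exactly this last lattice step---passing from a finite maximal disjoint family to finite dimension with the sharp count---together with the companion fact that lattice disjointness of positive vectors coincides with orthogonality of their supports in the standard form; once these are in place the entropy estimates close the argument at once.
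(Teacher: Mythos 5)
Your existence argument is correct but follows a genuinely different route from the paper's. The paper takes a Weyl sequence $\xi_n$ for $\sqrt{H}$ at the spectral point $\sqrt{\lambda_0}$, uses the first Beurling--Deny property to show that the moduli $|\xi_n|$ form again a Weyl sequence, and identifies the weak limit as an eigenvector through weak closedness of the operator $\sqrt{H}$ (via Mazur's theorem); you instead minimise $\E$ directly over positive unit vectors and invoke weak lower semicontinuity of the closed nonnegative form $\E[\cdot]-\lambda_0\|\cdot\|^2$. Both proofs then exclude a vanishing weak limit by the same Section 3 theorem, fed by the uniform entropy bound $\sup_n S(\varphi_{\xi_n}|\psi)\le\beta(\sup_n\E[\xi_n]-\gamma)$ coming from (4.1). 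Your variational reduction to positive vectors (parallelogram identity for the real form, then $\E[|\xi|]\le\E[\xi]$ with $\||\xi|\|=\|\xi\|$) is sound, and the weak-lower-semicontinuity step is, if anything, cleaner than the paper's operator-level argument; the two approaches are equivalent in strength.

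The degeneracy part contains a genuine gap, located exactly where you anticipated. The counting itself --- the support estimate $S(\varphi|\psi)\ge-\ln\psi(s_\varphi)$, the LSI, and $\sum_i\psi(s_i)\le 1$ for mutually orthogonal supports --- is the paper's argument, and your ``disjointness'' can be replaced outright by Hilbert-space orthogonality, since for positive vectors in the tracial standard form $(\xi|\eta)=0$ if and only if $s_\xi s_\eta=0$. What fails is the final identification $m_0=m$. The ordered space $(L^2_\R(M,\tau),L^2_+(M,\tau))$ is \emph{not} a vector lattice unless $M$ is abelian (Sherman's theorem: the self-adjoint part of a C*-algebra is lattice-ordered only in the commutative case, and the same holds for its $L^2$-space), so there are no lattice operations, no bands, no disjoint complements, and no ``Archimedean totally ordered $\Rightarrow$ one-dimensional'' principle to appeal to. Closure of $V_\R$ under the modulus map $\xi\mapsto|\xi|$, which you do prove correctly, is strictly weaker than a lattice structure and does not support band decompositions.

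Worse, the implication you want cannot be derived from the ingredients you use, because it is false at that level of generality. Take $M=M_2(\C)$, which is the Clifford algebra of a two-dimensional $\calH$, and $\E=0$ with $\F=L^2(M,\tau)$. This is an energy form satisfying (4.1) with respect to the trace state $\tau$, with $\beta=1$ and $\gamma=-\ln 2$, since $S(\varphi|\tau)\le\ln 2$ for every state of $M_2(\C)$. Its real ground-state eigenspace $V_\R$ is the whole four-dimensional space of self-adjoint matrices, stable under the modulus, yet any family of positive vectors with mutually orthogonal supports has at most two elements: maximal ``disjoint'' families do not span, and indeed here $\dim\ker(H-\lambda_0)=4>2=e^{\beta(\lambda_0-\gamma)}$. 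So no argument resting only on modulus-stability and maximal disjointness can close your last step. Note that the paper's own proof is terse at precisely the same point: it bounds the cardinality of orthonormal families of positive eigenvectors and then asserts the bound for $m_0$, implicitly assuming that $E_0$ admits an orthonormal basis of positive vectors --- an assumption the same example shows to be nontrivial. What your estimates (and the paper's displayed ones) actually establish is the bound $m\le e^{\beta(\lambda_0-\gamma)}$ on the number of mutually orthogonal positive ground-state vectors; bridging that quantity to the eigenvalue multiplicity requires an additional idea, or additional hypotheses, beyond what either argument supplies.
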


\begin{proof}
As $\beta(\E[\cdot]-\gamma\|\cdot\|^2)$ is an energy form for any $\beta>0,\gamma\le\lambda_0$, we may suppose $\beta=1, \gamma=0$ and deal with the logarithmic Sobolev inequality
\[
S(\varphi_\xi|\psi)\le \E[\xi]\qquad\xi\in\F\cap L^2_+(M,\tau),\quad \|\xi\|=1,
\]
which implies, in particular, $sp(H)\subseteq [0,+\infty)$, i.e.  $\lambda_0\ge 0$. Since $\sqrt{\lambda_0}\in sp(\sqrt{H})$, we may consider a Weyl normalized sequence $\xi_n\in D(H)=\F$ such that
$\lim_n\|\sqrt{H}\xi_n-\sqrt{\lambda_0}\xi_n\|=0$. It follows that the sequence minimises the energy form, as
\[
\begin{split}
\sup_n\E[\xi_n]&=\sup_n\|\sqrt{H}\xi_n\|^2=\sup_n\|\sqrt{H}\xi_n-\sqrt{\lambda_0}\xi_n+\sqrt{\lambda_0}\cdot\xi_n\|^2\\
&\le2\sup_n\Bigl(\|\sqrt{H}\xi_n-\sqrt{\lambda_0}\xi_n\|^2 + \|\sqrt{\lambda_0}\cdot\xi_n\|^2\Bigr)\\
&\le2\Bigl(\sup_n\|\sqrt{H}\xi_n-\sqrt{\lambda_0}\xi_n\|^2 +\lambda_0\Bigr)<+\infty
\end{split}
\]
and
\[
\begin{split}
\lim_n(\E[\xi_n]-\lambda_0)&=\lim_n(\|\sqrt{H}\xi_n\|^2-\|\sqrt{\lambda_0}\xi_n\|^2)=\lim_n(\sqrt{H}\xi_n-\sqrt{\lambda_0}\xi_n|\sqrt{H}\xi_n+\sqrt{\lambda_0}\xi_n)\\
&\le\lim_n\|\sqrt{H}\xi_n-\sqrt{\lambda_0}\xi_n\|\cdot \|\sqrt{H}\xi_n+\sqrt{\lambda_0}\xi_n\|\\
&\le(\sup_n\sqrt{E[\xi_n]}+\lambda_0)\lim_n\|\sqrt{H}\xi_n-\sqrt{\lambda_0}\xi_n\|=0.
\end{split}
\]
By the first Beurling-Deny contraction property of the energy form $(\E,\F)$, the normalized sequence $|\xi_n|\in D(H)=\F$ is still a Weyl approximating sequence for $\sqrt{\lambda_0}$
\[
\begin{split}
\lim_n \|\sqrt{H}|\xi_n|-\sqrt{\lambda_0}|\xi_n|\|^2&=\lim_n \{\E[|\xi_n|]+\lambda_0-2\sqrt{\lambda_0}(\sqrt{H}|\xi_n|||\xi_n|)\}\\
&\le\lim_n \{\E[|\xi_n|]+\lambda_0-2\sqrt{\lambda_0}\sqrt{\lambda_0}\}\\
&\le \lim_n\{\E[\xi_n]-\lambda_0\}=0.
\end{split}
\]
Hence, we may assume that the sequence is made by positive, unit vectors $\xi_n\in\F\cap L^2_+(M,\tau)$ and consider a weak limit of it $\xi\in L^2_+(M,\tau)$.
Since $\lim_n\|\sqrt{H}\xi_n-\sqrt{\lambda_0}\xi_n\|=0$ implies that $\lim_n (\sqrt{H}\xi_n-\sqrt{\lambda_0}\xi_n)=0$ weakly in $L^2(M,\tau)$, we have $\lim_n\sqrt{H}\xi_n=\lim_n(\sqrt{H}\xi_n-\sqrt{\lambda_0}\cdot\xi_n)+\lim_n\sqrt{\lambda_0}\cdot\xi_n=\sqrt{\lambda_0}\cdot\xi$ weakly in $L^2(M,\tau)$.
As the operator $(\sqrt{H},\F)$ is closed with respect to the norm topology of $L^2(M,\tau)$, by Mazur's Theorem, it is also closed with respect to the weak topology of $L^2(M,\tau)$ and we may conclude that $\xi\in D(\sqrt{H})=\F$ and $\sqrt{H}\xi=\lambda_0\cdot\xi$ and consequently that $\xi\in D(H)$ and $H\xi=\lambda_0\cdot\xi$. What is left to be proved to deduce that $\lambda_0$ is an eigenvalue of $H$ is that $\xi\neq 0$. This follows from Theorem 3.10 and the fact that by (4.1), the relative entropy $S(\cdot|\psi)$ is uniformly bounded above on the set of states $\mathcal{C}:=\{\varphi_{\xi_n}\in M_{*+}:n\in\N\}$
\[
\sup_{\varphi\in\mathcal{C}} S(\varphi|\psi)=\sup_n S(\varphi_{\xi_n}|\psi)\le\beta(\sup_n\E[\xi_n]-\gamma)<+\infty.
\]
To prove the finite degeneracy of $\lambda_0$, we start to notice that, since the energy form is real and has the first Beurling-Deny contraction property, the eigenspace $E_0\subset\F$ corresponding to $\lambda_0$ has a basis of positive vectors. Let $m\in\N^*$ be a non zero natural number and $\xi_i\in\F\cap L^2_+(M,\tau)$, $i=1,\dots,m$, a family of positive, orthonormal vectors such that
$\E[\xi_i]=\lambda_0$. Denote by $s_i\in M$ the support projections of the states $\varphi_{\xi_i}(\cdot):=(\xi_i|\cdot\xi_i)$, $i=1,\dots,m$.  By a lemma above, we have
\[
S(\varphi_i|\psi)\ge -\ln\psi(s_i)\qquad i=1,\dots,m,
\]
and, by the logarithmic Sobolev inequality, we have too
\[
-\ln\psi(s_i)\le\beta(\lambda_0-\gamma)\qquad i=1,\dots,m
\]
and we get the bound $\psi(s_i)\ge e^{-\beta(\lambda_0-\gamma)}$ for all $i=1,\dots,M$. By the orthogonality of the eigenvectors, the support projections of the states are mutually orthogonal so that
\[
1=\psi(1_M)\ge \psi(s_1)+\cdots +\psi (s_m)\ge m\cdot e^{-\beta(\lambda_0-\gamma)}
\]
and the multiplicity $m_0$ of $\lambda_0$ is bounded by $1\le m_0\le e^{\beta(\lambda_0-\gamma)}$.
\end{proof}

\begin{cor}
The ground state $\psi_0$ is non degenerate if $\beta(\lambda_0-\gamma)<\ln 2$ and in this case $(\E_0,\F_0):=(\E[\cdot]-\lambda_0\cdot\|\cdot\|^2,\F)$ is a Dirichlet form with respect to $\psi_0$.
 \end{cor}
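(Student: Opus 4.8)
The plan is to settle the two assertions separately, reducing each to material already in hand. For the non-degeneracy I would simply read off the multiplicity bound $m_0\le e^{\beta(\lambda_0-\gamma)}$ from the preceding theorem: the hypothesis $\beta(\lambda_0-\gamma)<\ln 2$ gives $e^{\beta(\lambda_0-\gamma)}<2$, and since $m_0$ is a positive integer with $1\le m_0<2$, we must have $m_0=1$. Hence the ground eigenspace $E_0$ is one-dimensional; as the proof of the theorem shows it has a basis of positive vectors, $E_0$ is spanned by a single unit vector $\xi_0\in L^2_+(M,\tau)$, representing the ground state $\psi_0(\cdot)=(\xi_0|\cdot\,\xi_0)$.

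For the Dirichlet property I would first confirm that $(\E_0,\F_0)$ is an energy form, and then invoke the criterion recorded in Section 2 (an energy form annihilating the representative vector of a state is Dirichlet relative to that state). Lower semicontinuity and finiteness precisely on $\F$ are clear, and lower boundedness by $0$ is immediate from $\lambda_0=\inf sp(H)$, since the intertwining relation gives $\E[\xi]\ge\lambda_0\|\xi\|^2$. Reality $\E_0[J\xi]=\E_0[\xi]$ is inherited from $\E$ because $J$ is isometric, and the first Beurling--Deny property passes to the shift because $\| |\xi| \|=\|\xi\|$ (the modulus $|\xi|=\xi_++\xi_-$ is assembled from the orthogonal pair $\xi_\pm$), so that $\E_0[|\xi|]=\E[|\xi|]-\lambda_0\|\xi\|^2\le\E[\xi]-\lambda_0\|\xi\|^2=\E_0[\xi]$.

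It then remains to check the vanishing $\E_0[\xi_{\psi_0}]=0$. Since $\xi_0$ is a ground eigenvector, $H\xi_0=\lambda_0\xi_0$, whence $\sqrt{H-\lambda_0}\,\xi_0=0$ and the intertwining relation yields $\E[\xi_0]=\|\sqrt{H-\lambda_0}\,\xi_0\|^2+\lambda_0\|\xi_0\|^2=\lambda_0$; therefore $\E_0[\xi_0]=\E[\xi_0]-\lambda_0\|\xi_0\|^2=0$. In particular $\xi_{\psi_0}=\xi_0\in\F=\F_0$, as the criterion requires, and applying it gives that $(\E_0,\F_0)$ is a Dirichlet form relative to $\psi_0$.

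The step I expect to be the genuine obstacle is that the Dirichlet-form framework is built relative to a \emph{faithful} reference state, so one must check that $\psi_0$ is faithful, equivalently that $\xi_0$ is strictly positive. This is not a formal consequence of $\dim E_0=1$, since a positivity-preserving semigroup can a priori reduce; I would extract it from the interplay of positivity preservation with the logarithmic Sobolev inequality, i.e. the Perron--Frobenius/ergodicity mechanism underlying Gross' theorem. A partial indication comes directly from the entropy lemmas: combining $S(\psi_0|\psi)\le\beta(\lambda_0-\gamma)$ with $S(\psi_0|\psi)\ge-\ln\psi(s_{\psi_0})$ forces $\psi(s_{\psi_0})\ge e^{-\beta(\lambda_0-\gamma)}>\tfrac12$, so the support $s_{\psi_0}$ is large; upgrading this to $s_{\psi_0}=1_M$ is the delicate point, after which the invocation of the criterion is complete.
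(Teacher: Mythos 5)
Your first two paragraphs are correct and are essentially the paper's own proof: non-degeneracy is read off from the multiplicity bound $m_0\le e^{\beta(\lambda_0-\gamma)}<2$ of the preceding theorem, and the Dirichlet property is obtained by checking that $(\E_0,\F_0)$ is an energy form with $\E_0[\xi_{\psi_0}]=0$ and invoking the criterion recalled in Section 2 (the paper phrases this as an appeal to Proposition 4.10 ii) of \cite{C1}). Your explicit verifications -- the lower bound $\E[\xi]\ge\lambda_0\|\xi\|^2$ from the intertwining relation, the persistence of the Beurling--Deny property under the shift because $\|\,|\xi|\,\|=\|\xi\|$, and $\E[\xi_{\psi_0}]=\lambda_0$ from the eigenvalue equation -- are all sound.

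Your last paragraph, however, is a misdiagnosis, and since you leave the ``obstacle'' unresolved it makes your proof formally incomplete. Two things are wrong with it. First, faithfulness of $\psi_0$ is not provable from the hypotheses: it can genuinely fail. Take $\calH$ one-dimensional, so that $M=Cl(\calH,\J)\cong\C\oplus\C$ with $\tau=(\tfrac{1}{2},\tfrac{1}{2})$; let $\psi=(\tfrac{9}{10},\tfrac{1}{10})$, let $H$ be multiplication by $(0,3)$, and take $\beta=1$, $\gamma=-\ln\tfrac{10}{9}$. Writing $t\in[0,1]$ for the mass that $\varphi_\xi$ assigns to the second point, the function $t\mapsto S(\varphi_\xi|\psi)-\E[\xi]+\gamma$ is convex on $[0,1]$ and nonpositive at both endpoints ($0$ at $t=0$, $\ln 9-3<0$ at $t=1$), hence nonpositive everywhere, so the logarithmic Sobolev inequality (4.1) holds with respect to the faithful state $\psi$ and $\beta(\lambda_0-\gamma)=\ln\tfrac{10}{9}<\ln 2$; yet the (non-degenerate) ground state is $\psi_0=(1,0)$, which is not faithful. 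So no Perron--Frobenius/ergodicity mechanism can close your gap: your bound $\psi(s_{\psi_0})>\tfrac{1}{2}$ is essentially the end of the road. Second -- and this is why the corollary survives -- faithfulness is never used. The Markovian projection $\xi\wedge\xi_{\psi_0}$, hence the Markovianity property, makes sense for an arbitrary positive vector, and the criterion you invoke runs through invariance plus positivity preservation: since $\E_0$ is nonnegative and closed, $\E_0[\xi_{\psi_0}]=0$ forces $(H-\lambda_0)\xi_{\psi_0}=0$, so the positivity preserving semigroup $T^0_t:=e^{-t(H-\lambda_0)}$ fixes $\xi_{\psi_0}$, and for any real $\xi\le\xi_{\psi_0}$ one gets $T^0_t\xi=\xi_{\psi_0}-T^0_t(\xi_{\psi_0}-\xi)\le\xi_{\psi_0}$, which is exactly the Markovian property of the semigroup relative to $\psi_0$ and hence the Dirichlet property of the form. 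Strict positivity of $\xi_{\psi_0}$ enters nowhere. The correct move is therefore to delete your final paragraph, noting only that the faithfulness appearing in the Section 2 definition is a standing convention on the reference state and not an ingredient of the criterion; with that observation your proof is complete and coincides with the paper's.
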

 \begin{proof}
 If $\beta(\lambda_0-\gamma)<\ln 2$, then $1\le m_0<2$ so that $m_0=1$. Since $(\E_0,\F_0)$ is an energy form and $\E_0[\xi_{\psi_0}]=\E[\xi_{\psi_0}]-\lambda_0\|\xi_{\psi_0}\|^2=0$, it follows from Proposition 4.10 ii) in \cite{6} that $(\E_0,\F_0)$ is a Dirichlet form with respect to $\psi_0$.
 \end{proof}

\section{Perturbations and stability of ground states}

In this section we consider the stability of a ground state, supposed to be unique. We first show that logarithmic Sobolev inequalities persist under certain perturbations of an energy form and then derive that the perturbed energy form still admits a unique ground state together with a lower bound upon the energy variation. It is worth noticing here that in \cite{15} the author proves the invariance of the intrinsic hypercontractivity property
of Schr\"odinger type operators on Riemannian manifolds with a class of potentials including the bounded ones, by using the equivalence of hyperconractivity and logarithmic Sobolev inequalities.
\vskip0.2truecm\noindent
Here $\Sigma_*(M)$ denotes the set of normal states of $M$.
\vskip0.2truecm\noindent
If $(h,D(h))$ is a lower bounded, self-adjoint operator affiliated to $M$ and $\varphi\in\Sigma_*(M)$, we denote by
$\mathbb{E}^h_\varphi$ the probability measure on $\R$ obtained as follows
\[
\mathbb{E}^h_\varphi(\Omega):=\varphi(\mathbb{E}^h(\Omega))\qquad \Omega\subseteq\R\,\,\text{Borel set},
\]
and we define $\varphi(h):=\int_\R\mathbb{E}^h_\varphi (d\lambda)\lambda\in\R\cup\{+\infty\}$. This value is finite in case $\xi_{\psi}$ belongs to the domain of $h$ acting (by left multiplication) on $L^2(M,\tau)$.
\vskip0.2truecm\noindent
In this section we fix a faithful, normal state $\psi\in\Sigma_*(M)$ and a parameter $\beta>0$. We will make use of the following

\begin{lem} (\cite{20} Lemma 12.1)
Let $(h,D(h))$ be a lower bounded, self-adjoint operator affiliated to $M$ such that
\[
c_\beta (h,\psi):=\inf\Big\{\frac{1}{\beta}\cdot S(\varphi|\psi)+\varphi(h):\varphi\in\Sigma_*(M)\Big\}\in\R.
\]
Then, there exists a unique, faithful, normal state $\psi^h_\beta\in\Sigma_*(M)$ such that
\[
c_\beta (h,\psi)=\frac{1}{\beta}\cdot S(\psi^h_\beta|\psi)+\psi^h_\beta (h).
\]
Moreover
\[
\frac{1}{\beta}\cdot S(\varphi|\psi^h_\beta)+c_\beta(h,\psi)\le\frac{1}{\beta}\cdot S(\varphi|\psi)+\varphi(h)\qquad \varphi\in\Sigma_*(M).
\]
The condition $c_\beta (h,\psi)\in\R$ is verified if $\psi(h)<+\infty$.
\end{lem}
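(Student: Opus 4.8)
The plan is to realise $\psi^h_\beta$ as the unique minimiser of the free--energy functional
\[
F(\varphi):=\frac{1}{\beta}\cdot S(\varphi|\psi)+\varphi(h),\qquad\varphi\in\Sigma_*(M),
\]
so that $c_\beta(h,\psi)=\inf_\varphi F(\varphi)$. First I would dispose of the finiteness claim: if $\xi_\psi\in D(h)$ then $\psi(h)<+\infty$, and evaluating $F$ at $\varphi=\psi$, where $S(\psi|\psi)=0$, gives $c_\beta\le F(\psi)=\psi(h)<+\infty$; since $h\ge -C\cdot 1_M$ is lower bounded and $S(\varphi|\psi)\ge 0$ on states, one has $F\ge -C$, whence $c_\beta\in\R$.

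For existence I would run the direct method. The energy term $\varphi\mapsto\varphi(h)$ is affine and, $h$ being lower bounded, weakly lower semicontinuous (it is the increasing limit of the bounded truncations $\varphi\mapsto\varphi((h+C)\wedge n)$), while $\varphi\mapsto S(\varphi|\psi)$ is convex and weakly lower semicontinuous; hence $F$ is convex and weakly lower semicontinuous. Along a minimising sequence $\varphi_n$, the bound $\varphi_n(h)\ge -C$ forces $\sup_n S(\varphi_n|\psi)<+\infty$, so by the theorem asserting that sublevel sets of the relative entropy are bounded in $M_*$ and relatively uniformly integrable with respect to $\psi$, the family $\{\varphi_n\}$ is such. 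Relative uniform integrability is exactly the noncommutative tightness that prevents loss of mass (compare the theorem excluding $0$ from the weak closure of such families): it yields relative weak compactness in $M_*$, so a subsequence converges weakly in $M_*$ to a normal functional $\psi^h_\beta$ with $\psi^h_\beta(1_M)=\lim_n\varphi_n(1_M)=1$, i.e. to a state, and lower semicontinuity gives $F(\psi^h_\beta)\le\liminf_n F(\varphi_n)=c_\beta$, hence equality.

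Uniqueness would follow from strict convexity: since $t\mapsto t\ln t$ is strictly convex, $\varphi\mapsto S(\varphi|\psi)$ is strictly convex on states of finite entropy, and adding the affine term $\varphi(h)$ keeps $F$ strictly convex, so its minimiser is unique. For faithfulness I would argue by contradiction: if $s_{\psi^h_\beta}\ne 1_M$, testing along $\varphi_t:=(1-t)\psi^h_\beta+t\psi$ and using that $\psi$ is faithful, the logarithmic singularity off the support of $\psi^h_\beta$ forces $\frac{d}{dt}S(\varphi_t|\psi)\big|_{t=0^+}=-\infty$, while $t\mapsto\varphi_t(h)$ has finite right derivative $\psi(h)-\psi^h_\beta(h)$; thus $F(\varphi_t)<F(\psi^h_\beta)$ for small $t>0$, contradicting minimality (the case $\psi(h)=+\infty$ reduces to this by truncating $h$).

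The variational inequality is the delicate part. Fixing $\varphi$ with $F(\varphi)<+\infty$ and setting $\varphi_t:=(1-t)\psi^h_\beta+t\varphi$, convexity of $F$ and minimality of $\psi^h_\beta$ make the right derivative $\frac{d}{dt}F(\varphi_t)\big|_{t=0^+}$ exist and be $\ge 0$. Differentiating, using $\frac{d}{dt}\varphi_t(h)\big|_{t=0^+}=\varphi(h)-\psi^h_\beta(h)$, the chain rule
\[
\left.\frac{d}{dt}S(\varphi_t|\psi)\right|_{t=0^+}=S(\varphi|\psi)-S(\varphi|\psi^h_\beta)-S(\psi^h_\beta|\psi),
\]
and the identity $c_\beta=\frac{1}{\beta}S(\psi^h_\beta|\psi)+\psi^h_\beta(h)$, the inequality $\frac{d}{dt}F(\varphi_t)\big|_{t=0^+}\ge 0$ rearranges precisely into
\[
\frac{1}{\beta}\cdot S(\varphi|\psi^h_\beta)+c_\beta(h,\psi)\le\frac{1}{\beta}\cdot S(\varphi|\psi)+\varphi(h).
\]
The hard part, and the reason the conclusion is an inequality rather than an equality, is to make this rigorous: the relative entropies are logarithms of the non-commuting spatial derivatives $R^2_\varphi$, so the derivative cannot be taken termwise but must be extracted from the convexity of $t\mapsto S(\varphi_t|\psi)$ together with an integral (monotone) representation, and the admissibility of the value $+\infty$ for the entropies is exactly what degrades the would--be identity to the one--sided bound. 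Rigorously one constructs $\psi^h_\beta$ as Araki's perturbed state and derives the bound by approximating $h$ with its truncations $h\wedge n$ and passing to the limit by lower semicontinuity, as in \cite{OP}.
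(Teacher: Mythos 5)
The paper does not actually prove this lemma: its ``proof'' is a pointer to \cite{OP} (Theorem 5.26, Chapter 12, Lemma 12.1 and formula (12.9)), where $\psi^h_\beta$ is constructed by Araki's perturbation theory, the variational formula is first established for \emph{bounded} $h$ as an exact identity, and the general lower-bounded case is obtained by truncating $h$ and passing to the limit by lower semicontinuity of the entropy. Your last sentence defers to exactly this machinery, so your endpoint coincides with the paper's; what you add is a direct-method/first-order-condition sketch. The formal skeleton of that sketch is sound: the chain rule $\frac{d}{dt}S(\varphi_t|\psi)\big|_{t=0^+}=S(\varphi|\psi)-S(\varphi|\psi^h_\beta)-S(\psi^h_\beta|\psi)$ (easily checked for density matrices) does rearrange the condition $\frac{d}{dt}F(\varphi_t)\big|_{t=0^+}\ge 0$ into precisely the stated inequality, and your diagnosis of why unbounded $h$ degrades the identity to a one-sided bound is the correct mechanism.

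Taken as a self-contained proof, however, the sketch has genuine gaps. The most serious is the compactness step: you pass from ``relatively uniformly integrable'' (the paper's Definition 3.4, phrased via the spectral projections of $R_\varphi$) to ``relatively weakly compact in $M_*$''. That implication is a noncommutative Dunford--Pettis theorem (Akemann's criterion for weak compactness in a predual) and is neither proved in the paper nor obvious; the paper's Theorem 3.10 establishes only the much weaker fact that $0$ is not in the weak $L^2$-closure of the representing vectors, and weak $L^2$-convergence of $\xi_{\varphi_n}$ does not yield $\sigma(M_*,M)$-convergence of $\varphi_{\xi_n}$, since $\xi\mapsto\varphi_\xi$ is not weakly continuous. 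Without this step a minimizing sequence may converge weak-$*$ in $M^*$ to a \emph{singular} functional, and the infimum need not be attained in $\Sigma_*(M)$ at all. Similarly, the faithfulness argument (right derivative $=-\infty$ off the support of a non-faithful minimizer) and the chain rule itself are only formal in a general von Neumann algebra: both require Araki's relative-modular/perturbation apparatus to make sense of the derivatives when the entropies involve non-commuting operators $R^2_\varphi$ and may be infinite --- that is, they require the very results of \cite{OP} you invoke at the end. So the honest summary is that your proposal is an instructive variational heuristic wrapped around the same citation the paper uses wholesale; to stand alone it would need the Dunford--Pettis step and the Araki-theoretic differentiability supplied in full.
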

\begin{proof}
See Theorem 5.26 and Chapter 12 page 213, Lemma 12.1 and formula (12.9) of \cite{20}.
\end{proof}
The state $\psi^h_\beta\in\Sigma_*(M)$ is called {\it the perturbed state}.

\begin{ex}
If $\psi$ is the trace state $\tau$, then the perturbed state is
\[
\psi^h_\beta(\cdot)=\tau(e^{-\beta h}\cdot)/\tau(e^{-\beta h}).
\]
and it is easy to see that $c_\beta(h,\tau)=$ equals a finite temperature partition function
\[
c_\beta(h,\tau)=-\frac{1}{\beta}\cdot\ln\tau(e^{-\beta h}).
\]
\end{ex}

\begin{thm}
Let $(\E_0,\F_0)$ be an energy form on $L^2(M,\tau)$, satisfying a logarithmic Sobolev inequality with respect to $\psi_0\in\Sigma_*(M)$ for some $\beta>0$ and $\gamma_0\in\R$
\begin{equation}
S(\varphi_\xi|\psi_0)\le\beta\bigl(\E_0[\xi]-\gamma_0\bigr)\qquad \xi\in L^2_+(M,\tau),\,\, \|\xi\|=1.
\end{equation}
Let $(h,D(h))$ be a lower bounded, self-adjoint operator affiliated to $M$ such that
\begin{equation}
c_\beta (h,\psi_0)\in\R
\end{equation}
and let $\psi^h_\beta\in\Sigma_*(M)$ be the perturbed state. Then the perturbed quadratic form
\begin{equation}
\F_h:=\F_0\cap D(h)\cap JD(h)\qquad \E_h[\xi]:=\E_0[\xi]+(\xi|(h+JhJ)\xi)/2
\end{equation}
is real, it verifies the first Beurling-Deny contraction property and it satisfies the perturbed logarithmic Sobolev inequality w.r.t. the perturbed state $\psi^h_\beta\in\Sigma_*(M)$ with
$\gamma_h:=\gamma_0+c_\beta(h,\psi_0)$
\begin{equation}
S(\varphi_\xi|\psi^h_\beta)\le\beta\bigl(\E_h[\xi]-\gamma_h\bigr)\qquad \xi\in\F_h\cap L^2_+(M,\tau),\,\, \|\xi\|=1.
\end{equation}
 If $(\E_h,\F_h)$ is densely defined and closable the same holds true for its form closure, the perturbed ground state energy $\lambda_h:=\inf\{\E_h[\xi]\in\R:\xi\in\F_h\}$
 is bounded below as follows
 \begin{equation}
-\infty<\gamma_0+c_\beta (h,\psi_0)=\gamma_0+\frac{1}{\beta}\cdot S(\psi^h_\beta|\psi_0)+\psi^h_\beta (h)\le\lambda_h
\end{equation}
and the variation of the ground state energies between perturbed and unperturbed systems, is bounded below as follows
\begin{equation}
-\infty<(\gamma_0-\lambda_0)+\frac{1}{\beta}\cdot S(\psi^h_\beta|\psi_0)+\psi^h_\beta (h)\le\lambda_h-\lambda_0.
\end{equation}
\end{thm}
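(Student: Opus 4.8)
The plan is to verify the three structural claims (reality, the first Beurling--Deny contraction, and the perturbed logarithmic Sobolev inequality) in turn, and then read off the two lower bounds, working throughout in the tracial GNS picture in which a real vector $\xi$ (i.e.\ $J\xi=\xi$) is identified with a self-adjoint operator $X$ affiliated to $M$ via $\xi=X\xi_\tau$. The single computation that drives everything is the identity $(\xi\,|\,(h+JhJ)\xi)/2=\varphi_\xi(h)$, valid for real $\xi\in\F_h$: since $JhJ\in M'$ acts as right multiplication by $h=h^*$, cyclicity of $\tau$ gives both $(\xi|h\xi)=\tau(XhX)=\tau(X^2h)$ and $(\xi|JhJ\xi)=\tau(X^2h)$, while $\varphi_\xi(a)=(\xi|a\xi)=\tau(X^2a)$ exhibits $\varphi_\xi$ as the normal functional of density $X^2$. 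Thus the perturbation term of $\E_h$ is precisely the number $\varphi_\xi(h)$ appearing in Lemma 5.1, which is why the perturbation is built from $(h+JhJ)/2$. Reality is then immediate: because $h=h^*$ and $J$ is a conjugation ($J^2=\mathrm{id}$), the operator $A:=h+JhJ$ satisfies $JAJ=A$, so $(J\xi|AJ\xi)=\overline{(\xi|A\xi)}=(\xi|A\xi)$; combined with $\E_0[J\xi]=\E_0[\xi]$ this yields $\E_h[J\xi]=\E_h[\xi]$, and $J\F_h=\F_0\cap JD(h)\cap D(h)=\F_h$ shows the domain is $J$-invariant.

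For the first Beurling--Deny contraction the decisive point is that $\varphi_\xi=\varphi_{|\xi|}$ as positive functionals. Indeed, for the orthogonal decomposition $X=X_+-X_-$ one has $|X|^2=X_+^2+X_-^2=X^2$, so $\xi$ and $|\xi|$ have the same density $X^2$, hence the same value $\varphi_\xi(h)=\varphi_{|\xi|}(h)$. Combining this invariance of the perturbation term with the first Beurling--Deny property of $\E_0$ gives
\[
\E_h[|\xi|]=\E_0[|\xi|]+\varphi_{|\xi|}(h)\le\E_0[\xi]+\varphi_\xi(h)=\E_h[\xi],\qquad J\xi=\xi\in\F_h .
\]

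The perturbed logarithmic Sobolev inequality (5.4) is a direct combination of two inputs. For $\xi\in\F_h\cap L^2_+(M,\tau)$ with $\|\xi\|=1$ the functional $\varphi_\xi$ is a normal state, so Lemma 5.1 applied to $\varphi=\varphi_\xi$ gives $\tfrac1\beta S(\varphi_\xi|\psi^h_\beta)\le\tfrac1\beta S(\varphi_\xi|\psi_0)+\varphi_\xi(h)-c_\beta(h,\psi_0)$. Bounding the first term on the right by the unperturbed inequality (5.1), namely $\tfrac1\beta S(\varphi_\xi|\psi_0)\le\E_0[\xi]-\gamma_0$, and then using $\E_0[\xi]+\varphi_\xi(h)=\E_h[\xi]$ together with $\gamma_h=\gamma_0+c_\beta(h,\psi_0)$, one arrives at $\tfrac1\beta S(\varphi_\xi|\psi^h_\beta)\le\E_h[\xi]-\gamma_h$, which is (5.4). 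When $(\E_h,\F_h)$ is closable I would extend this to its form closure by lower semicontinuity: approximating $\xi$ in the closed domain by $\xi_n\in\F_h$ with $\E_h[\xi_n]\to\overline{\E_h}[\xi]$ and $\xi_n\to\xi$ in $L^2(M,\tau)$, lower semicontinuity of $\varphi\mapsto S(\varphi|\psi^h_\beta)$ preserves the inequality, and reality and the contraction property are likewise stable under form closure.

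Finally the lower bounds come from nonnegativity of relative entropy between states. As $\psi^h_\beta$ and $\varphi_\xi$ are states, $S(\varphi_\xi|\psi^h_\beta)\ge0$, so (5.4) forces $\E_h[\xi]\ge\gamma_h$ for every positive unit vector $\xi\in\F_h$; reality and the first Beurling--Deny property proved above reduce the infimum defining $\lambda_h$ to positive unit vectors (using $\||\xi|\|=\|\xi\|$), giving $\lambda_h\ge\gamma_h=\gamma_0+c_\beta(h,\psi_0)$, a finite number by hypothesis (5.2) which equals $\gamma_0+\tfrac1\beta S(\psi^h_\beta|\psi_0)+\psi^h_\beta(h)$ by the defining property of $\psi^h_\beta$ in Lemma 5.1; this is (5.5), and subtracting $\lambda_0$ yields (5.6). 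The step I expect to require the most care is the honest treatment of the unbounded, merely lower-bounded operator $h$: justifying the linchpin identity $(\xi|(h+JhJ)\xi)/2=\varphi_\xi(h)$ and the equality $\varphi_\xi=\varphi_{|\xi|}$ on the correct operator and form domains, and verifying that $\F_h$ is stable under $\xi\mapsto|\xi|$, are exactly the places where the tracial density bookkeeping must be carried out rigorously rather than only formally.
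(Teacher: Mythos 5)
Your proposal is correct and follows essentially the same route as the paper, whose own proof is a one-sentence appeal to Lemma 5.1 (the Ohya--Petz variational property of $\psi^h_\beta$), the unperturbed inequality, the positivity of relative entropy, and the definition of $\gamma_h$. In fact you supply the details the paper leaves implicit -- notably the linchpin identity $(\xi|(h+JhJ)\xi)/2=\varphi_\xi(h)$, the verification of reality and of the first Beurling--Deny property via $\varphi_\xi=\varphi_{|\xi|}$, and the lower-semicontinuity argument for the form closure -- so your write-up is a faithful, more complete rendering of the intended argument.
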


\begin{proof}
This is a consequence of the perturbed logarithmic Sobolev inequality, the  positivity of the relative entropy, the definition of $\gamma_h$ and the hypothesis $c_\beta(h,\psi_0)\in\R$.
\end{proof}
{\it It should be stressed that, in general, the perturbed state is by no means the ground state of the perturbed Hamiltonian}.
\begin{cor}
If $\psi_0$ is the trace state $\tau$ then
\begin{equation}
(\gamma_0-\lambda_0)-\frac{1}{\beta}\cdot\ln\tau(e^{-\beta h}).\le\lambda_h-\lambda_0.
\end{equation}
\end{cor}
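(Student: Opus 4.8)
The plan is to read off the corollary as a direct specialization of inequality (5.6) in Theorem 5.4, combined with the explicit computation of the perturbed state in the tracial case recorded in Example 5.3. The starting point is the general lower bound from Theorem 5.4,
\[
(\gamma_0-\lambda_0)+\frac{1}{\beta}\cdot S(\psi^h_\beta|\psi_0)+\psi^h_\beta (h)\le\lambda_h-\lambda_0,
\]
which holds whenever the hypotheses of that theorem are met, in particular $c_\beta(h,\psi_0)\in\R$.

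First I would recall, via Lemma 5.1, that the quantity appearing on the left-hand side is precisely the variational constant,
\[
\frac{1}{\beta}\cdot S(\psi^h_\beta|\psi_0)+\psi^h_\beta (h)=c_\beta(h,\psi_0),
\]
so that (5.6) reads simply $(\gamma_0-\lambda_0)+c_\beta(h,\psi_0)\le\lambda_h-\lambda_0$. The entire content of the corollary is then the identification of $c_\beta(h,\psi_0)$ in the special case $\psi_0=\tau$.

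Next I would specialize to the trace state and invoke Example 5.3, which computes both the perturbed state $\psi^h_\beta(\cdot)=\tau(e^{-\beta h}\cdot)/\tau(e^{-\beta h})$ and the associated constant as a finite-temperature partition function,
\[
c_\beta(h,\tau)=-\frac{1}{\beta}\cdot\ln\tau(e^{-\beta h}).
\]
Substituting this expression into the inequality $(\gamma_0-\lambda_0)+c_\beta(h,\tau)\le\lambda_h-\lambda_0$ yields exactly
\[
(\gamma_0-\lambda_0)-\frac{1}{\beta}\cdot\ln\tau(e^{-\beta h})\le\lambda_h-\lambda_0,
\]
which is the claimed bound.

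There is no substantive obstacle here: the corollary is a pure substitution, so the only point requiring care is consistency of hypotheses. I would note that the finiteness assumption $c_\beta(h,\tau)\in\R$ underlying Theorem 5.4 is, in the tracial case, equivalent to the finiteness of the partition function $\tau(e^{-\beta h})\in(0,+\infty)$, which is guaranteed as soon as $\xi_\tau\in D(h)$ by the final clause of Lemma 5.1; under this condition the displayed chain of equalities is legitimate and the bound follows.
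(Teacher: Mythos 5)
Your proof is correct and matches the paper's (implicit) argument exactly: the corollary is read off from inequality (5.6) of the perturbation theorem by identifying the entropy-plus-energy term with $c_\beta(h,\psi_0)$ (as already recorded in (5.5) and Lemma 5.1) and then substituting the tracial-case formula $c_\beta(h,\tau)=-\frac{1}{\beta}\ln\tau(e^{-\beta h})$ from the example on perturbed trace states. The only slip is bookkeeping: in the paper's numbering the theorem is 5.3 and the tracial example is 5.2, not 5.4 and 5.3.
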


\begin{rem}
i) The lower boundedness of $(h,D(h))$ guarantees that $c_\beta (h,\psi)\in\R\cup\{+\infty\}$ which is the essential hypotheses by which the perturbed state exists. The hypotheses $c_\beta (h,\psi_0)<+\infty$ is needed to avoid triviality, as it just means that for at least one state $\varphi\in\Sigma_*(M)$ both the relative entropy $S(\varphi|\psi)$ and the energy $\varphi(h)$ are finite.
\vskip0.2truecm\noindent
ii) A natural condition by which $(\E_h,\F_h)$ is densely defined, closable is in Theorem 6.4 of \cite{6}.
\vskip0.2truecm\noindent
iii) The bound (5.7) can be compared to the bound (3.2) of Lemma 4.1 in \cite{13} (valid, however, in general probability gage spaces and for perturbations not necessarily bounded below). One may conjecture that (5.7) could be valid under the solely finiteness of $\tau(e^{-\beta h})$.
\end{rem}

\section{On existence and uniqueness of the ground state of physical Hamiltonians}
In \cite{13} Leonard Gross obtained the existence and uniqueness of the ground state of physical Hamiltonians describing spin $1/2$ Dirac particles in a four dimensional space-time, interacting with an external neutral scalar field (Theorem 7 in \cite{13}).
In that framework $\calH$ is the Hilbert space direct sum of the irreducible representation spaces of an electron and a positron and the conjugation $\J$ on it is the one that by second quantization gives rise to the the Parity-Charge-Time PCT symmetry. The proof relies in showing that
\begin{itemize}
\item the Hamiltonian of the interacting system splits as $H=H_0+H_1$ (formula (5.39) of \cite{13}), where $H_1$, which depends on the interaction only, has the form (in our notations)
\[
H_1=(h+JhJ)/2
\]
for a self-adjoint $h\in Cl(\calH,\J)$ (formula (5.33) of \cite{13} and related  comments)
\item $H_0$ is the sum of the second quantized free Hamiltonian of the spin 1/2 Dirac particles, a trace class operator on $L^2(Cl(\calH,\J),\tau)$ and a multiple of the number operator $N$
\item the semigroup $e^{-tN}$ on $L^2(Cl(\calH),\tau)$ is hypercontractive (Lemma 6.1 of \cite{13}) and its ground state is nondegenerate
\item the semigroup $e^{-tH_0}$ on $L^2(Cl(\calH),\tau)$ is hypercontractive with a nondegenerate ground state too by  Theorem 6 of \cite{13}, since $H_0$ dominates a positive multiple of $N$
\item $H$ admits a nondegenerate ground state by  Theorem 4 and Lemma 4.1 of \cite{13}.
\end{itemize}

By the analysis developed in the previous sections, we provide an {\it infinitesimal} proof of the main result of Theorem 7 of \cite{13}, based, {\it in primis}, on the logarithmic Sobolev inequality (4.4) satisfied by the number operator $N$ on a the Clifford algebra $M=Cl(\calH,\J)$.

\begin{thm}
Let $H$ be the Hamiltonian of a system of second quantized spin 1/2 Dirac particles, interacting with an external neutral scalar field and subject to the hypotheses of Theorem 7, discussed in Section 5 of \cite{13}.\\
Then there exists a strictly positive eigenvector $\psi^H\in L^2_+(Cl(\calH,\J),\tau)$ of $H$ corresponding to the non degenerate lowest energy level $\inf sp(H)$.
\end{thm}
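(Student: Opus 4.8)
The plan is to transfer the logarithmic Sobolev inequality of the number operator $N$ (Example 4.2) first to $H_0$ by scaling and domination, then to $H=H_0+H_1$ by Theorem 5.3, and finally to invoke Theorem 4.3. Using the structural reduction recalled above (Section 5 of \cite{Gr1}), I would write $H=H_0+H_1$ with $H_1=(h+JhJ)/2$ for a self-adjoint $h\in Cl(\calH,\J)$, and with $H_0\ge aN$ for some $a>0$ and $e^{-tH_0}$ positivity preserving by \cite{Gr1}, so that $\E_{H_0}$ is a genuine energy form. Scaling (4.4) gives $S(\varphi_\xi|\tau)\le\E_N[\xi]=\tfrac1a\E_{aN}[\xi]$, the logarithmic Sobolev inequality for $\E_{aN}$ with respect to $\tau$ with $\beta=1/a$, $\gamma_0=0$; since $\E_{H_0}\ge a\,\E_N$ as quadratic forms, the chain $S(\varphi_\xi|\tau)\le\E_N[\xi]\le\tfrac1a\E_{H_0}[\xi]$ shows that $\E_{H_0}$ satisfies the same inequality, with $\beta=1/a$ and $\gamma_0=0$.

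Next I would apply Theorem 5.3 with unperturbed data $(\E_{H_0},\tau)$ and perturbation $h$. Being bounded, $h$ is lower bounded, affiliated to $M$ and satisfies $\tau(h)<+\infty$, so $c_\beta(h,\tau)\in\R$ by Lemma 5.1; Theorem 5.3 then exhibits the energy form of $H=H_0+H_1$, namely $\E_H[\xi]=\E_{H_0}[\xi]+(\xi|(h+JhJ)\xi)/2$, as a real energy form with the first Beurling--Deny property satisfying the logarithmic Sobolev inequality with respect to the faithful perturbed state $\psi^h_\beta=\tau(e^{-\beta h}\cdot)/\tau(e^{-\beta h})$, with $\beta=1/a$ and $\gamma_h=c_\beta(h,\tau)=-a\ln\tau(e^{-h/a})$. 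Theorem 4.3 then applies directly: $\lambda_0:=\inf sp(H)$ is an eigenvalue of finite multiplicity $m_0\le e^{\beta(\lambda_0-\gamma_h)}$ whose eigenspace has a basis of positive vectors, yielding a positive eigenvector $\psi^H\in L^2_+(Cl(\calH,\J),\tau)$.

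It remains to show that $\psi^H$ may be taken strictly positive and that $\lambda_0$ is non-degenerate, $m_0=1$. The degeneracy bound alone does not suffice, since for the physical parameters one has no control forcing $\beta(\lambda_0-\gamma_h)<\ln2$, so Corollary 4.4 need not apply; I would instead argue by ergodicity. The semigroup $e^{-tH_0}$ is positivity improving (ergodic) by \cite{Gr1}, while $e^{-tH_1}=a_tJa_tJ$ with $a_t=e^{-th/2}\in M$ preserves $L^2_+(M,\tau)$ by the fourth defining property of the standard form; hence, by the Trotter product formula $e^{-tH}=\lim_n\bigl(e^{-tH_0/n}e^{-tH_1/n}\bigr)^n$, the semigroup $e^{-tH}$ is positivity improving. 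The noncommutative Perron--Frobenius theorem (\cite{Gr1}) then gives uniqueness: on the ground eigenspace $e^{-tH}$ acts as $e^{-t\lambda_0}$ times the identity, so it carries the positive vector $\psi^H$ to a strictly positive vector, forcing $\psi^H$ itself to be strictly positive; since $L^2_+(M,\tau)$ contains no two orthogonal strictly positive vectors, the positive basis of the eigenspace reduces to one vector and $m_0=1$.

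The hard part, exactly as in \cite{Gr1}, is this last step: checking that the positivity improving (ergodic) character of $e^{-tN}$ persists for $H$ and that the noncommutative Perron--Frobenius mechanism is available in the present standard form. By contrast, the transfer of the logarithmic Sobolev inequality by scaling and Theorem 5.3, and the ensuing existence and finite degeneracy via Theorem 4.3, are routine once the reduction $H=H_0+H_1$ is granted; the genuinely new, infinitesimal content is the replacement of Gross' hypercontractivity input by the inequality (4.4) for $N$.
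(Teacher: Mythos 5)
Your first two paragraphs follow the paper's own proof almost verbatim: the decomposition $H=H_0+H_1$ with $H_1=(h+JhJ)/2$ for a bounded self-adjoint $h\in Cl(\calH,\J)$, the transfer of the logarithmic Sobolev inequality (4.4) from $N$ to the form of $H_0$ via the domination $H_0\ge aN$ (the paper writes $\mu$ for your $a$, and gets Markovianity of $e^{-tH_0}$ from Lemma 6.2 of \cite{Gr1}), the application of Theorem 5.3 with the bounded perturbation $h$ (closedness of $\E_h$ being automatic since $h$ is bounded), and finally Theorem 4.3. Up to that point the two arguments coincide and both deliver an eigenvector of $H$ in $L^2_+(Cl(\calH,\J),\tau)$ for $\lambda_0=\inf sp(H)$ with the finite multiplicity bound.

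Where you diverge is the endgame, and here your diagnosis is more candid than the paper itself: Theorem 4.3 as stated gives existence and \emph{finite} degeneracy only, Corollary 4.4 gives $m_0=1$ only under $\beta(\lambda_0-\gamma_h)<\ln 2$, which the physical parameters do not guarantee, and the paper's proof nevertheless ends by claiming strict positivity and non-degeneracy directly from Theorem 4.3 -- a step it does not justify. Your proposal to close this by ergodicity and the noncommutative Perron--Frobenius theorem is the natural mechanism (it is Gross's original one), and the final deduction (a positivity improving semigroup forces the ground eigenvector to be strictly positive, and two orthogonal eigenvectors cannot both be strictly positive, hence $m_0=1$) is correct \emph{granted} that $e^{-tH}$ is positivity improving. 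But that is precisely where your argument has a genuine gap: the Trotter formula only yields that $e^{-tH}$ is positivity \emph{preserving}. Each approximant $\bigl(e^{-tH_0/n}e^{-tH_1/n}\bigr)^n$ is indeed positivity improving, but this property does not survive strong limits (already $e^{-tH_0/n}\to I$ strongly, and $I$ is not improving), so ``hence $e^{-tH}$ is positivity improving'' does not follow as written. Supplying this step requires a separate argument -- in \cite{Gr1} it is exactly the content of Theorem 4 and Lemma 4.1, which rest on hypercontractivity, not on an abstract limit argument. So: same route as the paper for existence and finite degeneracy; for uniqueness and strict positivity you correctly identify a need the paper glosses over, but your proposed fix still has a hole at the positivity improving step, which you yourself flag as ``the hard part'' rather than resolve.
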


\begin{proof}
On the GNS space $L^2(Cl(\calH,\J),\tau)$, the Hamiltonian $H$ of the system may be written
\[
H=D^{-1}d\Gamma(A)D+L_\alpha+R_\alpha
\]
where i) $A=G+F+(\delta m)I_\calH$ is a self adjoint operator on $\calH$, commuting with $\J$, sum of the one-particle free Hamiltonian $G$, a trace-class operator $F$ and a multiple
$(\delta m)I_\calH$ of the identity operator ii) $\alpha^*=\alpha\in Cl(\calH,\J)$ (the term $L_\alpha+R_\alpha$ carries all the pair-creation and pair-annihilation contribution of the interacting part of
$H$).
\vskip0.2truecm\noindent
Since $A\ge\mu\cdot I_\calH$ for some $\mu>0$, by Lemma 6.2, it turns out that $D^{-1}d\Gamma(A)D$ generates a Markovian semigroup and a Dirichlet form $\E_0$ with respect to $\tau$.
\vskip0.2truecm\noindent
Since, moreover, $D^{-1}d\Gamma(A)D\ge\mu\cdot N$ for a suitable $\mu>0$, by the logarithmic Sobolev inequality (4.4) satisfied by the Clifford Dirichlet form $\E_N$, we get a logarithmic Sobolev inequality satisfied by $\E_0$
\begin{equation}
S(\varphi_\xi|\tau)\le\frac{1}{\mu}\cdot \E_0[\xi]\qquad\xi\in L^2_+(M,\tau),\,\, \|\xi\|=1
\end{equation}
with parameters $\beta=1/\mu$ and $\gamma_0=0$. Setting $h:=2\alpha\in Cl(\calH,\J)$ , one has $L_\alpha+R_\alpha=(h+JhJ)/2$ so that, applying Theorem 5.3 above, we get that the quadratic form $\E_h=\E_0+(\cdot|(h+JhJ)\cdot)/2$ of $H$, which is closed since $h$ is bounded, satisfies the logarithmic Sobolev inequality
\begin{equation}
S(\varphi_\xi|\psi^h_{\mu^{-1}})\le\mu^{-1}\bigl(\E_h[\xi]-\gamma_h\bigr)\qquad \xi\in\F_h\cap L^2_+(M,\tau),\,\, \|\xi\|=1.
\end{equation}
Theorem 4.3 above then implies that $H$ has a strictly positive eigenvector corresponding to the eigenvalue $\inf sp(H)$.
\end{proof}

\subsection*{Acknowledgement} The author warmly thanks the anonymous referee for his/her careful and scrupulous reading and suggestions.

\normalsize
\begin{center} \bf REFERENCES\end{center}
\normalsize
\begin{enumerate}

\bibitem[1]{1} S. Albeverio, R. Hoegh-Krohn, \newblock{Dirichlet forms and Markovian semigroups on C$^*$--algebras}, \newblock{\it Comm. Math. Phys.} {\bf 56} {\rm (1977)}, 173-187.

\bibitem[2]{2} H. Araki, \newblock{Some properties of modular conjugation operator of von Neumann algebra and a non-commutative Radon-Nikodym theorem with a chain rule},
\newblock{\it Pac. J. Math.} {\bf 50} {\rm (1974)}, 309-354.

\bibitem[3]{3} H. Araki, \newblock{Relative Entropy for States of von Neumann Algebras II}, \newblock{\it Publ. RIMS, Kyoto Univ.} {\bf 13} {\rm (1977)}, 173-192.

\bibitem[4]{4} O. Bratteli. D.W. Robinson, \newblock{``Operator Algebras and Quantum Statistical Mechanics 1''},\newblock{Text and Monographs in Physics, Second Edition, Springer-Verlag Berlin Heidelberg New York, 1987}.

\bibitem[5]{5} E. Carlen, E. Lieb, \newblock{Optimal Hypercontractivity for Fermi Fields and Related Non-Commutative Integration Inequalities}, \newblock{\it Commun. Math. Phys.} {\bf 155} {\rm (1993)}, 27-46.


\bibitem[6]{6} F.E.G. Cipriani, \newblock{Dirichlet forms and Markovian semigroups on standard forms of von Neumann algebras}, \newblock{\it J. Funct. Anal.} {\bf 147} {\rm (1997)}, 259-300.




\bibitem[7]{7} F.E.G. Cipriani, \newblock{``The emergence of Noncommutative Potential Theory''}, \newblock{Springer Proc. Math. Stat.} {\bf 377} (2023), 41-106.



\bibitem[8]{8} F.E.G. Cipriani, J.-L. Sauvageot, \newblock{Derivations as square roots of Dirichlet forms},\\
\newblock{\it J. Funct. Anal.} {\bf 201} {\rm (2003)}, no. 1, 78--120.

\bibitem[9]{9} A. Connes, \newblock{Characterization des espaces vectoriels ordonnes sous-jacents aux algebres de von Neumann},
\newblock{\it Ann. Inst. Fourier, Grenoble} {\bf 24} {\rm (1974)}, 121--155.

%
%
%
%
\bibitem[10]{10} E.B. Davies, J.M. Lindsay, \newblock{Non--commutative symmetric Markov semigroups}, \newblock{\it Math. Z.} {\bf 210} {\rm (1992)}, 379-411.
%
\bibitem[11]{11} J. Dixmier, \newblock{Formes lineaires sur un anneau d'operateurs}, \newblock{\it Bull. Soc. Math, France}, Paris {\bf 81} {\rm (1953)}, 222--245.

%
%
\bibitem[12]{12} C. Dellacherie, P.A. Meyer, \newblock{``Probabilit\'es et Potentiel''}, \newblock{Hermann, Paris, 1975} Ch. II.
%

\bibitem[13]{13} L. Gross, \newblock{Existence and Uniqueness of Physical Ground States}, \newblock{\it J. of Funct. Anal.} {\bf 10} {\rm (1972)}, 52-109.

\bibitem[14]{14} L. Gross, \newblock{Hypercontractivity and Logarithmic Sobolev Inequalities for the Clifford-Dirichlet form}, \newblock{\it Duke Math. J.} {\bf 42} {\rm (1975)}, No.3, 383-396.

\bibitem[15]{15} L. Gross, \newblock{Invariance of intrinsic hyperconractivity under perturbation of Schr\"odinger operators}, \newblock{\it Adv. Nonlinear Stud.} {\bf 25} {\rm (2025)}, No.4, 951-1024.


\bibitem[16]{16} R. Haag, \newblock{``Local Quantum Physics: Fields, particles, Algebras''},\\ \newblock{Text and Monographs in Physics, Springer-Verlag Berlin Heidelberg New York, 1996}.
\bibitem[17]{17} U. Haagerup, \newblock{The standard form of von Neumann algebras}, \newblock{\it Math. Scand.} {\bf 37} {\rm (1975)}, 271-283.
%
%
%
%
%
%
%
%
\bibitem[18]{18} J.M. Lindsay, P.A. Meyer \newblock{Fermion hypercontractivity}, \newblock{\it Quantum probability and related topics, QP-PQ, VII}, World Sci. Publ., River Edge, NJ, 1992
.
\bibitem[19]{19} E. Nelson, \newblock{Notes on noncommutative integration}, \newblock{\it J. Funct. Anal.} {\bf 15} {\rm (1974)},103-116.

\bibitem[20]{20} M. Ohya. D. Petz, \newblock{``Quantum Entropy and Its Use''},\\ \newblock{Text and Monographs in Physics, Springer-Verlag Berlin Heidelberg New York, 1993}.

%


\bibitem[21]{21} J.-L. Sauvageot, \newblock{Quantum Dirichlet forms, differential calculus and semigroups,  Quantum Probability and Applications V},
\newblock{\it Lecture Notes in Math.} {\bf 1442} {\rm (1990)}, 334-346.


\bibitem[22]{22} I.E. Segal, \newblock{A noncommutative extension of abstract integration}, \newblock{\it Annals of Math.} {\bf 57} {\rm (1953)}, 401-457.

\bibitem[23]{23} I.E. Segal, \newblock{Construction of non-linear local quantum processes}, \newblock{\it Ann. of Math.}  (2) {\bf 92} {\rm (1970)}, 462-481.


%
%
%

\bibitem[24]{24} W. Wils, \newblock{A remark on the preceeding paper of L. Gross}, \newblock{\it J. of Funct. Anal.} {\bf 10} {\rm (1972)}, 110-113.

\end{enumerate}
\end{document}